\newtheorem{theorem}{Theorem}[section]
\newtheorem{lemma}[theorem]{Lemma}
\newtheorem{proposition}[theorem]{Proposition}
\newtheorem{corollary}[theorem]{Corollary}
\newenvironment{definition}[1][Definition]{\begin{trivlist}
\item[\hskip \labelsep {\bfseries #1}]}{\end{trivlist}}
\newenvironment{example}[1][Example]{\begin{trivlist}
\item[\hskip \labelsep {\bfseries #1}]}{\end{trivlist}}
\newenvironment{remark}[1][Remark]{\begin{trivlist}
\item[\hskip \labelsep {\bfseries #1}]}{\end{trivlist}}
\newcommand{\agda}[0]{Agda}
\newcommand{\coq}[0]{Coq}
\newcommand{\minlog}[0]{Minlog}
\newcommand{\set}[0]{{\color{NavyBlue}\mathtt{Set}}}
\newcommand{\fin}[1][n]{{\color{NavyBlue}\mathtt{Fin}}~n}
\DeclareMathOperator{\whnorm}{{\color{NavyBlue}\mathtt{wh-norm}}}
\DeclareMathOperator{\whapp}{{\color{NavyBlue}\mathtt{wh-\hspace{-2pt}\app}}}
\DeclareMathOperator{\whpiun}{{\color{NavyBlue}\mathtt{wh-\hspace{-2pt}\piun}}}
\DeclareMathOperator{\whpide}{{\color{NavyBlue}\mathtt{wh-\hspace{-2pt}\pide}}}
\DeclareMathOperator{\whcomp}{{\color{NavyBlue}\mathtt{wh-\hspace{-2pt}\comp}}}
\DeclareMathOperator{\whmap}{{\color{NavyBlue}\mathtt{wh-map}}}
\DeclareMathOperator{\whfold}{{\color{NavyBlue}\mathtt{wh-fold}}}
\DeclareMathOperator{\whappend}{{\color{NavyBlue}\mathtt{wh-\hspace{-2pt}\append}}}
\DeclareMathOperator{\etanorm}{{\color{NavyBlue}\eta\mathtt{norm}}}
\DeclareMathOperator{\etalist}{{\color{NavyBlue}\eta\mathtt{list}}}
\DeclareMathOperator{\etaneut}{{\color{NavyBlue}\eta\mathtt{neut}}}
\DeclareMathOperator{\nfnorm}{{\color{NavyBlue}\mathtt{nf-norm}}}
\DeclareMathOperator{\nflist}{{\color{NavyBlue}\mathtt{nf-list}}}
\DeclareMathOperator{\nfneut}{{\color{NavyBlue}\mathtt{nf-neut}}}
\DeclareMathOperator{\nffold}{{\color{NavyBlue}\mathtt{nf-fold}}}
\DeclareMathOperator{\nfmap}{{\color{NavyBlue}\mathtt{nf-map}}}
\DeclareMathOperator{\nfappend}{{\color{NavyBlue}\mathtt{nf-\hspace{-2pt}\append}}}
\DeclareMathOperator{\standard}{{\color{NavyBlue}\mathtt{standard}}}
\newcommandx*\lcni[2][1=\Delta, 2=\Gamma]{#1 ~{\color{NavyBlue}\supseteq}~ #2}
\newcommandx*\incl[2][1=\Gamma, 2=\Delta]{#1 ~{\color{NavyBlue}\subseteq}~ #2}
\DeclareMathOperator{\app}{\$\hspace{-2pt}\$}
\DeclareMathOperator{\append}{+\hspace{-5pt}+}
\DeclareMathOperator{\comp}{\circ\hspace{-2pt}\circ}
\DeclareMathOperator{\piun}{\pi_1}
\DeclareMathOperator{\pide}{\pi_2}
\DeclareMathOperator{\teunit}{{\color{ForestGreen}\mathtt{tt}}}
\DeclareMathOperator{\gramdecl}{:\hspace{-2pt}:\hspace{-2pt}=~}
\newcommand{\context}[1][\type]{{\color{NavyBlue}\mathtt{Con}}(#1)}
\newcommand{\type}[1][n]{{\color{NavyBlue}\mathtt{type}}_{#1}}
\newcommand{\tybase}[1][k]{{\color{ForestGreen}\mathtt{`\alpha}_{#1}}}
\newcommand{\tyunit}[0]{{\color{ForestGreen}\mathtt{`1}}}
\newcommandx*\typrod[2][1=\sigma, 2=\tau]{#1 ~{\color{ForestGreen}\mathtt{`}\hspace{-2pt}\times}~ #2}
\newcommandx*\tyarrow[2][1=\sigma, 2=\tau]{#1 ~{\color{ForestGreen}\mathtt{`}\hspace{-3pt}\rightarrow}~ #2}
\newcommand{\tylist}[1][\sigma]{{\color{ForestGreen}\mathtt{`list}}~ #1}
\newcommandx*\typingrule[3][1=t, 2=\sigma, 3=\Gamma]{#3 ~{\color{NavyBlue}\vdash}~ #1 ~{\color{NavyBlue}\colon} #2}
\newcommand{\pointwiselift}{^s}
\newcommandx*\term[2][1=\sigma, 2=\Gamma]{#2 ~{\color{NavyBlue}\vdash}~ #1}
\newcommandx*\subst[2][1=\Gamma, 2=\Delta]{#2 ~{\color{NavyBlue}\vdash\hspace{-6pt}\pointwiselift}~ #1}
\newcommand{\idsubst}[1][\Gamma]{{\color{NavyBlue}\mathtt{id}}_{#1}}
\DeclareMathOperator{\base}{{\color{ForestGreen}\mathtt{base}}}
\DeclareMathOperator{\pop}{{\color{ForestGreen}\mathtt{pop!}}}
\DeclareMathOperator{\step}{{\color{ForestGreen}\mathtt{step}}}
\DeclareMathOperator{\ConEmpty}{{\color{ForestGreen}\varepsilon}}
\newcommandx*\ConExtend[2][1=\Gamma, 2=\sigma]{#1~{\color{ForestGreen}\cdot}~#2}
\newcommandx*\whneutral[2][1=\sigma, 2=\Gamma]{#2 ~{\color{NavyBlue}\vdash_{\mathit{whne}}}~ #1}
\newcommandx*\whnormal[2][1=\sigma, 2=\Gamma]{#2 ~{\color{NavyBlue}\vdash_{\mathit{whnf}}}~ #1}
\newcommandx*\neutral[2][1=\sigma, 2=\Gamma]{#2 ~{\color{NavyBlue}\vdash_{\mathit{ne}}}~ #1}
\newcommandx*\normal[2][1=\sigma, 2=\Gamma]{#2 ~{\color{NavyBlue}\vdash_{\mathit{nf}}}~ #1}
\newcommand{\eqdef}{\overset{\text{def}}{=}}
\newcommand{\weaken}[1][inc]{{\color{NavyBlue}\mathtt{wk}}_{\mathit{#1}}}
\newcommand{\weakene}[1][inc]{{\color{NavyBlue}\mathtt{wk}\pointwiselift}_{\mathit{#1}}}
\DeclareMathOperator{\tepair}{~{\color{ForestGreen}\mathtt{`,}}~}
\DeclareMathOperator{\telam}{{\color{ForestGreen}\mathtt{`}\lambda}}
\newcommand{\telamclosure}[1][\rho]{{\color{ForestGreen}\mathtt{`}\lambda[}#1{\color{ForestGreen}]}}
\DeclareMathOperator{\teapp}{{\color{ForestGreen}\mathtt{`\$}}}
\DeclareMathOperator{\tett}{{\color{ForestGreen}\mathtt{`\langle\rangle}}}
\DeclareMathOperator{\tepiun}{{\color{ForestGreen}\mathtt{`\hspace{-2pt}\piun}}}
\DeclareMathOperator{\tepide}{{\color{ForestGreen}\mathtt{`\hspace{-2pt}\pide}}}
\DeclareMathOperator{\temap}{{\color{ForestGreen}\mathtt{`map}}}
\DeclareMathOperator{\teappend}{~{\color{ForestGreen}\mathtt{`\hspace{-3pt}+\hspace{-5pt}+}}~}
\DeclareMathOperator{\tefold}{{\color{ForestGreen}\mathtt{`fold}}}
\DeclareMathOperator{\tenil}{{\color{ForestGreen}\mathtt{`[]}}}
\DeclareMathOperator{\tecons}{~{\color{ForestGreen}\mathtt{`:\hspace{-2pt}:}}~}
\newcommandx*\temappend[3][1=v_1, 2=n, 3=v_2]
\DeclareMathOperator{\vappend}{{\color{NavyBlue}\NBE\hspace{-3pt}\mathtt{+\hspace{-5pt}+}}}
\DeclareMathOperator{\vmap}{{\color{NavyBlue}\NBE\hspace{-2pt}\mathtt{map}}}
\DeclareMathOperator{\vcomp}{{\color{NavyBlue}\NBE\hspace{-2pt}\circ}}
\DeclareMathOperator{\vfold}{{\color{NavyBlue}\NBE\hspace{-3pt}\mathtt{fold}}}
\newcommand{\exampleDomain}[0]{\tylist[(\protect{\typrod[\tyunit][\tybase]})]}
\newcommand{\exampleType}[0]{
  \term[\protect{\tyarrow[\exampleDomain][\exampleDomain]}][\ConEmpty]}
\DeclareMathOperator{\rules}{\beta\delta\iota\eta\nu}
\DeclareMathOperator{\reduce}{{\color{NavyBlue}\leadsto_{\rules}}}
\DeclareMathOperator{\reduces}{{\color{NavyBlue}\leadsto_{\rules}^*}}
\DeclareMathOperator{\conv}{{\color{NavyBlue}\equiv_{\rules{}}}}
\newcommand{\NBE}[0]{{\color{NavyBlue}\mathcal{M}}}
\newcommand{\NBEL}[0]{{\color{NavyBlue}\mathcal{L}}}
\newcommand{\NBEE}[0]{\NBE\pointwiselift}
\newcommand{\WHNF}[0]{{\color{NavyBlue}\NBE_{wh}}}
\newcommand{\AWHNF}[0]{{\color{NavyBlue}\NBE^{\star}_{wh}}}
\newcommand{\AWHNFL}[0]{{\color{NavyBlue}\NBEL^{\star}_{wh}}}
\newcommand{\WHNFE}[0]{{\color{NavyBlue}\NBEE_{wh}}}
\newcommandx*\whmodel[2][1=\sigma, 2=\Gamma]{{\color{NavyBlue}\WHNF}(#2 , #1)}
\newcommandx*\gluer[3][1=\sigma, 2=\whmodel, 3=\Gamma]{{\color{NavyBlue}\langle} #3 ~{\color{NavyBlue},}~ #1 {\color{NavyBlue}\rangle\hspace{-4pt}\lefttorightarrow} #2}
\newcommandx*\awhmodel[2][1=\sigma, 2=\Gamma]{{\color{NavyBlue}\AWHNF}(#2 , #1)}
\newcommandx*\awhmodellist[3][1=\sigma, 2=\Vdash_{\sigma}, 3=\Gamma]{{\color{NavyBlue}\AWHNFL}(#3 , #1, #2)}
\newcommandx*\model[2][1=\sigma, 2=\Gamma]{{\color{NavyBlue}\NBE}(#2 , #1)}
\newcommandx*\modellist[3][1=\sigma, 2=\modellistfunabs, 3=\Gamma]{{\color{NavyBlue}\NBEL}(#3 , #1, #2)}
\newcommand{\modellistfunabs}[1][\sigma]{\mathtt{M}_{#1}}
\newcommand{\modellistfun}[1][\sigma]{\model[#1][.~]}
\newcommandx*\whmodele[2][1=\Gamma, 2=\Delta]{{\color{NavyBlue}\WHNFE}(#2 , #1)}
\newcommandx*\modele[2][1=\Gamma, 2=\Delta]{{\color{NavyBlue}\NBEE}(#2 , #1)}
\newcommand{\idmodele}[1][\Gamma]{{\color{NavyBlue}\mathtt{id}}_{\NBEE ~ #1}}
\newcommand{\reify}[1][\sigma]{{\color{NavyBlue}\downarrow\hspace{-1pt}}_{#1}}
\newcommand{\listreify}[1][\sigma]{{\color{NavyBlue}\downarrow\hspace{-3pt}\downarrow\hspace{-1pt}}_{#1}}
\newcommand{\reflect}[1][\sigma]{{\color{NavyBlue}\uparrow\hspace{-1pt}}_{#1}}
\newcommand{\listreflect}[1][\sigma]{{\color{NavyBlue}\uparrow\hspace{-3pt}\uparrow\hspace{-1pt}}_{#1}}
\newcommand{\reflecte}[1][\texttt{id}]{{\color{NavyBlue}\reflect[#1]\pointwiselift}}
\DeclareMathOperator{\norm}{{\color{NavyBlue}norm}}
\DeclareMathOperator{\termeval}{{\color{NavyBlue}eval}}
\newcommandx*\awhrelation[4][1=\sigma, 2=t, 3=T, 4=\Gamma]{\awhmodel[#1][#4]
     ~{\color{NavyBlue}\ni}~ #2 ~{\color{NavyBlue}\lightning^{\star}_{wh}}~ #3}
\newcommandx*\whrelation[3][1=\sigma, 2=T, 3=\Gamma]{{\color{NavyBlue}\lightning_{wh}}(#1)(#2)}
\newcommandx*\sound[4][1=\sigma, 2=t, 3=T, 4=\Gamma]{\model[#1][#4]
     ~{\color{NavyBlue}\ni}~ #2 ~{\color{NavyBlue}\lightning}~ #3}
\newcommandx*\soundlist[6][1=\mathit{xs}, 2=\mathit{XS}, 3=\Gamma, 4=\sigma, 5=\modellistfunabs, 6=\protect{\soundlistfunabs[\_][\_]}]
\newcommandx*\soundlistfunabs[3][1=t, 2=T, 3=\sigma]{\modellistfunabs[#3]~#1~\lightning~#2}
\newcommandx*\sounde[4][1=\rho, 2=R, 3=\Gamma, 4=\Delta]{\modele[#3][#4]
     ~{\color{NavyBlue}\ni}~ #1 ~{\color{NavyBlue}\lightning\hspace{-3pt}\lightning}~ #2}
\newcommandx*\exteq[3][1=\sigma, 2=T, 3=U]{#2 ~{\color{NavyBlue}\equiv}_{#1}~ #3}
\newcommandx*\exteqlist[3][1=\sigma, 2=T, 3=U]{#2 ~{\color{NavyBlue}\equiv}_{#1}^{{\color{NavyBlue}\mathtt{`list}}}~ #3}
\newcommandx*\uniform[2][1=\sigma, 2=T]{{\color{NavyBlue}\mathtt{Uni}}_{#1}~ #2}
\begin{document}

\conferenceinfo{ICFP '13}{September 25--27, 2013, Boston}
\copyrightyear{2013}
\copyrightdata{[to be supplied]}

\titlebanner{DRAFT}        
\preprintfooter{New Equations for Neutral Terms}   

\title{New Equations for Neutral Terms}
\subtitle{A Sound and Complete Decision Procedure, Formalized}

\authorinfo{Guillaume Allais \and Conor McBride}
           {University of Strathclyde}
           {\{guillaume.allais, conor.mcbride\}@strath.ac.uk}
\authorinfo{Pierre Boutillier}
           {PPS - Paris Diderot}
           {pierre.boutillier@pps.univ-paris-diderot.fr}

\maketitle

\begin{abstract}
The definitional equality of an intensional type theory is its
test of type compatibility. Today's systems rely on ordinary
evaluation semantics to compare expressions
in types, frustrating users with type errors arising
when evaluation fails to identify two `obviously' equal
terms. If only the machine could decide a richer
theory! We propose a way
to decide theories which supplement evaluation with `$\nu$-rules',
rearranging the neutral parts of normal forms, and report
a successful initial experiment.

We study a simple $\lambda$-calculus with primitive fold, map and
append operations on lists and develop in \agda\ a sound and complete
decision procedure for an equational theory enriched with monoid,
functor and fusion laws.
\end{abstract}


\keywords
Normalization by Evaluation, Logical Relations, Simply-Typed Lambda Calculus,
Map Fusion

\section{Introduction}

\newcommand{\fbc}[1]{\framebox{\hspace*{-0.04in}\texttt{\raisebox{0in}[0.10in][0.04in]{#1}}\hspace*{-0.04in}}}
\newcommand{\fbb}[1]{\framebox{\hspace*{-0.04in}\texttt{\raisebox{0in}[0.08in][0.02in]{#1}}\hspace*{-0.04in}}}
\newcommand{\fba}[1]{\framebox{\hspace*{-0.04in}\texttt{\raisebox{0in}[0.06in][0in]{#1}}\hspace*{-0.04in}}}
\DefineVerbatimEnvironment%
  {Code}{Verbatim}
  {
    commandchars=\\\{\}, codes={\catcode`$=3}
  }

The programmer working in intensional type theory is no stranger to
`obviously true' equations she wishes held \emph{definitionally} for
her program to typecheck without having to chase down ill-typed
terms and brutally coerce them. In this article, we present one way
to relax definitional equality, thus accommodating some of her
longings. We distinguish three types of fundamental relations between terms.

The first denotes computational rules: it is untyped, \emph{oriented} and
denoted by $\leadsto$ in its one step version or $\leadsto^\star$ when the
reflexive transitive congruence closure is considered. In Table
\ref{table:deltaiota}, we introduce a few such rules which correspond to the
equations the programmer writes to define functions. They are referred to as
$\delta$ (for \emph{definitions}) and $\iota$ (for pattern-matching on
\emph{inductive} data) rules and hold computationally just like the more common
$\beta$-rule.

\begin{table}[h]
\begin{Code}
map : (a $\rightarrow$ b) $\rightarrow$ list a $\rightarrow$ list b
map f []        $\leadsto$ []
map f (x :: xs) $\leadsto$ f x :: map f xs

(++) : list a $\rightarrow$ list a $\rightarrow$ list a
[]      ++ ys $\leadsto$ ys
x :: xs ++ ys $\leadsto$ x :: (xs ++ ys)

fold : (a $\rightarrow$ b $\rightarrow$ b) $\rightarrow$ b $\rightarrow$ list a $\rightarrow$ b
fold c n []        $\leadsto$ n
fold c n (x :: xs) $\leadsto$ c x (fold c n xs)
\end{Code}
\caption{\label{table:deltaiota}$\delta\iota$-rules - computational}
\end{table}

The second is the judgmental equality ($\equiv$): it is typed, tractable
for a machine to decide and typically includes $\eta$-rules for negative types
therefore internalizing some kind of \emph{extensionality}.
Table \ref{table:eta} presents such rules, explaining that some types have
unique constructors which the programmer can demand. They are well supported
in e.g. Epigram~\cite{DBLP:conf/sfp/ChapmanAM05} and \agda~\cite{agda} both
for functions and records but still lacking for records in \coq~\cite{coq}.

\begin{table}[h]
\begin{Code}
$\Gamma \vdash$ f $\equiv$ $\lambda$ x. f x      : a $\to$ b
$\Gamma \vdash$ p $\equiv$ ($\piun$ p , $\pide$ p) : a * b
$\Gamma \vdash$ u $\equiv$ ()            : 1
\end{Code}
\caption{\label{table:eta}$\eta$-rules - canonicity}
\end{table}

The third is the propositional equality ($=$): this lets us state and give
evidence for equations on open terms which may not be identified judgmentally.
Table \ref{table:nu} shows a kit for building computationally inert
\emph{neutral} terms growing layers of thwarted progress around a variable which
we dub the `nut', together with some equations on neutral terms which held only
propositionally -- until now. This paper shows how to extend the judgmental
equality with these new `$\nu$-rules'. We gain, for example, that
\texttt{map swap . map swap $\equiv$ id}, where \texttt{swap} swaps the
elements of a pair.

\[
\fba{x}\;\;\fbb{\fba{~}~a}\;\;\fbb{$\piun$~\fba{~}}\;\;\fbb{$\pide$~\fba{~}}
\;\;\fbb{\fba{~}~++~ys}\;\;\fbb{map~f~\fba{~}}\;\;\fbb{fold~n~c~\fba{~}}
\]

\begin{table}[h]
\begin{tabular}{@{}rcl@{}}
\fbb{\fba{xs}~++~[]} & = & \fba{xs} \smallskip\\
\fbc{\fbb{(\fba{xs}~++~ys)}~++~zs} & = & \fbb{\fba{xs}~++~(ys~++~zs)}
\medskip\\
\fbb{map~id~\fba{xs}} & = & \fba{xs} \smallskip\\
\fbc{map~f~\fbb{(map~g~\fba{xs})}} & = & \fbb{map~(f~.~g)~\fba{xs}}
\medskip \\
\fbc{map~f~\fbb{(\fba{xs}~++~ys)}} & = & \fbc{\fbb{map~f~\fba{xs}}~++~map~f~ys}
\medskip \\
\fbc{fold~c~n~\fbb{(map~f~\fba{xs})}} & = & \fbb{fold~(c~.~f)~n~\fba{xs}}\smallskip\\
\fbc{fold~c~n~\fbb{(\fba{xs}~++~ys)}} & = & \fbb{fold~c~(fold~c~n~ys)~\fba{xs}}
\end{tabular}
\caption{\label{table:nu}$\nu$-rules}
\end{table}

A $\nu$-rule is an equation between neutral terms with the same nut which holds
just by structural induction on the nut, with $\beta\delta\iota$ reducing
subgoals to inductive hypotheses -- the classic proof pattern of Boyer and
Moore~\cite{BoyerMoore}. Consequently, we need only use $\nu$-rules
to standardize neutral terms after ordinary evaluation stops. This separability
makes implementation easy, but the proof of its completeness correspondingly
difficult. Here, we report a successful experiment in formalizing a modified
normalization by evaluation proof for simply-typed $\lambda$-calculus with
list primitives and the $\nu$-rules above.

\paragraph{Contents} We define the terms of the theory and deliver a sound and
complete normalization algorithm in Sections \ref{experimental-setting} to
\ref{correctness-proof}. We then explain how this promising experiment can be
scaled up to type theory (Section \ref{scale-up-tt}) thus suggesting that
other frustrating equations of a similar character may soon come within our
grasp (Section \ref{further-opportunities}).


\section{Our Experimental Setting}
\label{experimental-setting}

In a dependently-typed setting, one has to deal with issues unrelated to the
matter at hand: Danielsson's formalization of a Type Theory as an
inductive-recursive family uses a non strictly positive
datatype~\cite{NadNbeDep}, Abel et al.~\cite{NbeDep1} resort to
recursive domain equations together with logical relations proving them
meaningful, McBride's proposition~\cite{McBride2010Outrageous} is only able
to steal the judgmental equality of the implementation language and Chapman's
big step formulation is not proven terminating~\cite{ChapmanPhd}.

We propose a preliminary experiment on a calculus for which the formalization
in \agda{} is tractable: we are interested in the modifications to be made to
an existing implementation in order to get a complete procedure for the extended
equational theory. We developed the algorithm during Boutillier's internship
at Strathclyde~\cite{LambList}; Allais completed the formalized meta-theory.

\paragraph{Types} The set of types is parametrized by a finite set of
base types $\tybase[1], \dots, \tybase[n]$ it can build upon. These
unanalysed base types give us a simple way to model expressions
exhibiting some parametric polymorphism.
$$\sigma, \tau, \dots \gramdecl
           \tybase
  \mid \tyunit
  \mid \typrod
  \mid \tyarrow
  \mid \tylist$$
\begin{remark} In the \agda{} implementation this
indexing by a finite set of base types is modelled by defining a
nat-indexed family $\type$ with a constructor $\tybase[]$ taking a
natural number $k$ bounded by $n$ (an element of $\fin$) to refer to
the $k^{th}$ base type.
\end{remark}

\paragraph{Terms} Terms follow the grammar presented below and the typing
rules described in Figure~\ref{typingrules} where contexts are just snoc
lists of variable names together with their type.
\begin{align*}
 t, u, \dots
    & \gramdecl x \mid \telam x. t \mid t \teapp u \mid \tett \mid t \tepair u \mid \tepiun t \mid \tepide t \mid \tenil \\
    & \mid hd \tecons tl \mid \temap(f , xs) \mid xs \teappend ys \mid \tefold (c , n , xs)
\end{align*}

\begin{figure*}
\begin{tabular}{l|r}
\begin{minipage}{0.25\textwidth}
\begin{mathpar}
\inferrule{ }{\base \colon \incl[\ConEmpty][\ConEmpty]}
\and \inferrule{\mathit{pr} \colon \incl}
     {\pop  \mathit{pr} \colon \incl[\protect{\ConExtend[\Gamma][(x \colon \sigma)]}][\protect{\ConExtend[\Delta][(x \colon \sigma)]}]}
\and \inferrule{\mathit{pr} \colon \incl}
     {\step  \mathit{pr} \colon \incl[\Gamma][\protect{\ConExtend[\Delta][(x \colon \sigma)]}]}
\end{mathpar}
\end{minipage}

& \begin{minipage}{0.75\textwidth}\begin{mathpar}
\inferrule{(x \colon \sigma) \in \Gamma}{\typingrule[x]}
\and \inferrule{\typingrule[t][\tau][\protect{\ConExtend[\Gamma][(x \colon \sigma)]}  ]}{\and \typingrule[\telam x. t][\tyarrow]}
\and \inferrule{\typingrule[t][\tyarrow] \\ \typingrule[u]}{\typingrule[t \teapp u][\tau]}
\end{mathpar}\begin{mathpar}\inferrule{ }{\typingrule[\tett][\tyunit]}
\and \inferrule{\typingrule[t] \\ \typingrule[u][\tau]}{\typingrule[t \tepair u][\typrod]}
\and \inferrule{\typingrule[t][\typrod]}{\typingrule[\tepiun t]}
\and \inferrule{\typingrule[t][\typrod]}{\typingrule[\tepide t][\tau]}
\and \inferrule{ }{\typingrule[\tenil][\tylist]}
\and \inferrule{\typingrule[hd] \\ \typingrule[tl][\tylist]}
  {\typingrule[hd \tecons tl][\tylist]}
\and \inferrule{\typingrule[xs][\tylist] \\ \typingrule[ys][\tylist]}
  {\typingrule[xs \teappend ys][\tylist]}
\and \inferrule{\typingrule[f][\tyarrow] \\ \typingrule[xs][\tylist]}
  {\typingrule[\temap(f , xs)][\tylist[\tau]]}
\and \inferrule{\typingrule[c][\tyarrow[\tyarrow]] \\ \typingrule[n][\tau]
\\ \typingrule[xs][\tylist]}{\typingrule[\tefold(c , n , xs)][\tau]}
\end{mathpar}
\end{minipage}
\end{tabular}
\caption{Context inclusion and typing rules}
\label{typingrules}
\end{figure*}
For sake of clarity in the formalization, we quote the constructors of
our object language, making a clear distinction from the corresponding
features of the host language, \agda, where we use the standard `typed
de Bruijn index' representation of well-typed
terms~\cite{deBruijn:dummies,DBLP:conf/csl/AltenkirchR99} to eliminate
junk from consideration.
In our treatment here, we always assume freshness of the variables
introduced by $\lambda$-abstractions. And we do not artificially separate
well-typed terms and typing derivations; in other words we will use
alternatively $\typingrule$ and $t \colon \term$ to denote the same objects.

\paragraph{Weakening} The notion of context inclusion
gives rise to a weakening operation $\weaken[\_]$ which can be viewed as
the action on morphisms of the functor $\term[\sigma][\_]$ from the
category of contexts and their inclusions to the category of well-typed
terms and functions between them. It is defined inductively (cf.
Figure~\ref{typingrules}) rather than as a function transporting
membership predicates from one context to its extension in order to
avoid having to use an extensionality axiom to prove two context inclusion
proofs to be the same. This more intensional presentation can already be
found under the name \textit{order preserving embeddings} in Chapman's
thesis~\cite{ChapmanPhd}.

\paragraph{From types to contexts} We can lift the notion of well-typed
terms $\term$ to whole parallel substitutions. For any two contexts named
$\Gamma$ and $\Delta$, the well-typed parallel substitution from $\Gamma$
to $\Delta$ is defined by:
$$\subst = \left\lbrace
\begin{array}{l@{\text{ if }\Gamma =~}l}
\top & \varepsilon \\
\subst[\Gamma'] \times \term[\sigma][\Delta]
  & \ConExtend[\Gamma'][(x : \sigma)]
\end{array}\right.$$
We write $t [ \rho ]$ for the application of the parallel substitution $\rho \colon \subst$
to the term $t \colon \term$ yielding a term of type $\term[\sigma][\Delta]$.

\begin{remark}
All the notions described in this document can be lifted in a pointwise
fashion to either contexts when they are defined on types or parallel
substitutions when they deal with terms. We will assume these extensions
defined and casually use the same name (augmented with: $\pointwiselift$)
for the extension and the original concept.
\end{remark}


\paragraph{Judgmental Equality} The equational theory of the calculus,
denoted $\conv$, is quite naturally the congruence closure of the
$\rules$-rules described earlier where reductions under
$\lambda$-abstraction are allowed. In this paper, we also mention
the relation $\reduces$ where the rules presented earlier are all considered with
a left to right orientation (except for the identity laws for the list functor
and the list monoid) thus inducing a notion of \emph{reduction}. The soundness
theorem proves that not only is the term produced by our normalization
procedure related to the source one but it is a reduct of it.

One easy sanity check we recommend before starting to work on the meta-theory was
to give a shallow embedding of the calculus in a pre-existing sound type theory
and to show that the reduction relation is compatible with the propositional
equality in this theory. We used \agda{} extended with a postulate stating
extensional equality for non-dependent functions in our formalization. Once the
reader is convinced that no silly mistakes were made in the equational theory,
she can start the implementation.


\section{Reduction Machinery}
\label{reduction-machinery}

When looking in details at different accounts of normalization by evaluation
~\cite{BerSch91,CoqDyb97,Coquand02,NbeEffects}, the reader should be able to
detect that there are two phases in the process:
firstly the evaluation function building elements of the model from well-typed
terms performs $\beta\delta\iota$-reductions and does not reduce under
$\lambda$-abstractions effectively building closures -- using the $\lambda$-abstractions
of the host language -- when encountering one.
Secondly the quoting machinery extracting terms from the model performs $\eta$-expansions
where needed which will cause the closures to be reduced and new computations to
be started. This two-step process was already more or less present in Berger and
Schwichtenberg's original paper~\cite{BerSch91}:
\begin{quote}Obviously each term in $\beta$-normal form may be transformed into
long $\beta$-normal form by suitable $\eta$-expansions. Therefore each term $r$
may be transformed into a unique long $\beta$-normal form $r^\star$ by
$\beta$-conversion and $\eta$-expansions.
\end{quote}

Building on this ascertainment, we construct a three (rather than two) staged
process successively performing $\beta\delta\iota$, $\eta$ and finally $\nu$
reductions whilst always potentially calling back a procedure from a preceding
stage to reduce further non-normal terms appearing when e.g. going under
$\lambda$-abstractions during $\eta$-expansion, distributing a map over an
append, etc.

\subsection{The Three Stages of Standardization}
\label{bigstep}
The normalization and standardization process goes through three successive stages
whence the need to define three different subsets of terms of our calculus. They
have to be understood simply as syntactic category restricting the shape of terms
typed in the same way as the ones in the original languages except for the few
extra constructors for which we explicitly detail what they mean.

\begin{remark}It should be noted that the two last steps never reduce a term to
a constructor-headed one for datatypes (lists in our setting). In particular,
the last step only rearranges stuck terms to produce terms which are themselves
stuck. In other words: if a term (a list in our case) is convertible to a
constructor headed term (be it either nil or cons), then it is reduced to it
by the first step of the reduction.
\end{remark}

\begin{example}We will consider the normalization of
$(\telam x. x) \teapp (\telam x. x)$ of type
$\exampleType$
as a running example demonstrating the successive steps.
\end{example}

\paragraph{Untyped $\beta\iota$-reductions} The first intermediate language we
are going to encounter is composed of weak-head $\beta\delta\iota$-normal
expressions i.e. we never reduce under a lambda, this role being assigned to
the $\eta$-expansion routine. Having $\lambda$-closures as first-class values
is one of the characteristics of this approach.

\begin{figure}[h]
\begin{empheq}{align*}
m \gramdecl & x \mid m \teapp w \mid \tepiun m \mid \tepide m \mid \tefold (w_1 , w_2 , m) \\
            & \mid \temap(w , m) \mid m \teappend w \\
w \gramdecl & m \mid \telamclosure x. t \mid \tett \mid w_1 \tepair w_2 \mid \tenil \mid w_1 \tecons w_2\\
\rho \gramdecl & \varepsilon \mid \rho , x \mapsto w
\end{empheq}
\caption{Weak-head normal forms}
\label{WHForms}
\end{figure}

These values are computed using a simple off the shelf environment machine which
returns a constructor when facing one; stores the evaluation environment in a
$\lambda$-closure when evaluating a term starting with a $\telam$; and calls an
helper function (e.g. $\whapp$, $\whpiun$, $\whpide$, etc.) on the recursively
evaluated subterms when uncovering an eliminator. These helper functions
either return a neutral if the interesting subterm was stuck or perform the
elimination which may start new computations (e.g. in the application case).
We call $\whnorm$ this evaluation function.

\begin{remark}This reduction step is absolutely type-agnostic and could therefore
be performed on terms devoid of any type information as in e.g. \coq{} where
conversion is untyped. Keeping and propagating \emph{some} types (e.g. the codomain
of the function in a map) is nonetheless needed to be able to infer back the type
of the whole expression which is crucial in the following steps.
\end{remark}

\begin{example}The untyped evaluation reduces our simple example
$(\telam x. x) \teapp (\telam x. x)$ to the usual identity function:
$\telamclosure[\teunit] x. x$.
\end{example}

\paragraph{Type-directed $\eta$-expansion}Then an $\eta$-expansion step kicks in and produces $\eta$-long values in a
type-directed way. It insists that the only neutrals worthy of being considered
normal forms are the ones of the base type. It also carves out the subset of stuck
lists in a separate syntactic category $l$ thus preparing for the last step which
will leave most of the rest of the language untouched.

\begin{figure}[h]
\begin{empheq}{align*}
n \gramdecl & x \mid n \teapp v \mid \tepiun n \mid \tepide n \mid \tefold (v_1 , v_2 , l) \\
v \gramdecl & n_{\tybase} \mid l \mid \telam x. v \mid \tett \mid v_1 \tepair v_2 \mid \tenil \mid v_1 \tecons v_2 \\
l \gramdecl & n_{\tylist} \mid \temap (v , l) \mid l \teappend v
\end{empheq}
\caption{$\eta$-long values}
\label{EtaLongValues}
\end{figure}

The $\eta$-expansion of product and function type actually calls back the
subroutines for $\beta\delta\iota$-rules projecting components out of pairs
or performing function application -- here to the variable newly introduced.
This step is the only one requiring a name generator which allows us to avoid
threading such an artifact along the whole reduction machinery.
We call $\etanorm$ the main function performing this step and present it in
Figure~\ref{etanorm}. $\etalist$ and $\etaneut$ are two trivial auxiliary
functions going structurally through either lists or neutral terms and calling
$\etanorm$ whenever necessary.

\begin{figure}[h]
$$
\begin{array}{@{\etanorm (}l@{)~t~=~}l}
  \tybase  & \etaneut t \\
  \tylist  & \etalist~ \sigma~ t \\
  \tyunit  & \tett \\
  \typrod  & \etanorm ~\sigma~ (\whpiun t) \tepair \etanorm ~\tau~ (\whpide t) \\
  \tyarrow & \telam x. \etanorm \tau (t \whapp x))
\end{array}
$$
\caption{From weak-head normal forms to $\eta$-long ones}
\label{etanorm}
\end{figure}

\begin{example}The $\eta$-expansion of the evaluated form
$\telamclosure[\teunit] x. x$ of type $\exampleType$ proceeds in
multiple steps.
\begin{itemize}
 \item The arrow type forces us to introduce a $\lambda$-abstraction:

$\telam x. \etanorm ~(\exampleDomain)~ ((\telamclosure[\teunit] x. x) \whapp x)$.
 \item Now, $(\telamclosure[\teunit] x. x) \whapp x$ trivially reduces to $x$,
a neutral of list type, left unmodified by $\eta$-expansion. Hence the
$\eta$-long form: $\telam x. x$.
\end{itemize}
\end{example}

\paragraph{$\nu$-rules reorganizing neutrals}
Standard forms have a very specific shape due to the fact that we now completely
internalize the $\nu$-rules. The new constructor $\temappend[\_][\_][\_]$ --
referred to as ``mapp'' -- has the obvious semantics that it represents the
concatenation of a stuck map and a list.

\begin{figure}[h]
\begin{empheq}{align*}
n \gramdecl & x \mid n \teapp v \mid \tepiun n \mid \tepide n \mid \tefold (v_1 , v_2 , n) \\
v \gramdecl & n_{\tybase} \mid s \mid \telam x. v \mid \tett \mid v_1 \tepair v_2 \mid \tenil \mid v_1 \tecons v_2 \\
s \gramdecl & \temappend
\caption{Standard Forms}
\label{StandardForms}
\end{empheq}
\end{figure}

The standard lists $s$ are produced by flattening the stuck map / append trees
present in $l$ after the end of the previous procedure whilst the fold / map
and fold / append fusion rules are applied in order to compute folds further
and reach the point where a stuck fold is stuck on a \emph{real} neutral lists.
These reductions are computed by the mutually defined $\nfnorm$, $\nfneut$ and
$\nflist$ respectively turning $\eta$-long normals, neutrals and lists into
elements of the corresponding standard classes. $\nfnorm$ and $\nfneut$ are
mostly structural except for the few cases described in Figure \ref{standnorm}.

We define $\standard$ as being the composition of $\etanorm$ and $\nfnorm$
whilst $\norm$ is the composition of $\whnorm$ and $\standard$. As one can
see below, $\nu$-rules can restart computations in subterms by invoking
subroutines of the evaluation function $\whnorm$. Formally proving
the termination of the whole process is therefore highly non-trivial.

\begin{figure}[h]
\begin{align*}
\nfnorm (\tylist) \mathit{xs_{ne}} &= \nflist \mathit{xs} \\
\nfneut (\tefold c ~n~ \mathit{xs}) &= \nffold c ~n~ (\nflist \mathit{xs})
\end{align*}
$$\begin{array}{@{\nflist~}l@{~=~}l}
  \mathit{xs_{ne}} & \temappend[\norm (\telam x. x)][xs][\tenil] \\
  (\temap f \mathit{xs}) & \nfmap ~f~ (\nflist \mathit{xs}) \\
  (\mathit{xs} \teappend \mathit{ys}) &
    \nfappend (\nflist \mathit{xs}) (\nfnorm ~\_~ \mathit{ys})
\end{array}$$

$$\begin{array}{ll@{~=~}l}
\multicolumn{3}{l}{\nffold c~n~(\temappend[f][xs][ys]) =
  \tefold~ \mathit{cf} ~\mathit{ih}~ \mathit{xs}} \\
\text{ where} & \mathit{cf} & \standard ~\_~ (c \whcomp f) \\
              & \mathit{ih} & \standard ~\_~ (\whfold~ c ~n~ \mathit{ys}) \\ \\

\multicolumn{3}{l}{\nfmap f~(\temappend[g][xs][ys]) = \temappend[fg][xs][fys]} \\
\text{ where} & \mathit{fg}  & \standard ~\_~ (f \whcomp g)) \\
              & \mathit{fys} & \standard ~\_~ (\whmap ~ f~ \mathit{ys}) \\ \\

\multicolumn{3}{l}{\nfappend (\temappend[f][xs][ys]) ~\mathit{zs} =
  \temappend[f][xs][yzs]} \\
\text{ where} & \mathit{yzs}  & \standard ~\_~ (ys \whappend zs)
\end{array}$$

\caption{From $\eta$-long values to standard ones}
\label{standnorm}
\end{figure}

\begin{example} $\nfnorm$ does not touch the $\lambda$-abstraction but expands
the neutral $x$ of type $\exampleDomain$ to $\temappend[\texttt{id}][x][\tenil]$
where \texttt{id} is the normal form of the identity function on
$\typrod[\tyunit][\tybase]$.

We leave it to the reader to check that:
\begin{align*}
 \texttt{id} &= \etanorm ~(\typrod[\tyunit][\tybase])~ p  \\
 \texttt{id} &= \telam p. \etanorm ~(\typrod[\tyunit][\tybase])~ p  \\
             &= \telam p. (\etanorm ~\tyunit~ (\tepiun p) \tepair
                            \etanorm ~\tybase~ (\tepide p))         \\
             &= \telam p. (\tett \tepair \tepide p)
\end{align*}
Hence the final standard form of $(\telam x. x) \teapp (\telam x. x)$:
$$\telam x. \temappend[\telam p. (\tett \tepair \tepide p)][x][\tenil]$$
\end{example}

The grammar of standard terms explicitly defines a hierarchy between stuck
functions: appends are forbidden to appear inside maps and both of them have
better not be found sitting in a fold. It is but one way to guarantee the
existence of standard forms and future extensions hopefully allowing the
programmer to add the $\nu$-rules she fancies holding definitionally will have
to make sure --for completeness' sake-- that such standard forms exist.


\section{Formalization of the Procedure}
\label{formalization-procedure}

What we are interested in here is to demonstrate the decidability of the equational
theory's extension rather than explaining how to prove termination of a big step
semantics in \agda{} and rely on functional induction to prove the different
properties. The reader keen on learning about the latter should refer to James
Chapman's thesis~\cite{ChapmanPhd} where he describes a principled solution to
proving termination of big step semantics for various calculi. We, on the other
hand, will focus on the former: we opted for a version of the algorithm based, in
the tradition of normalization by evaluation, on a model construction which
basically collapses the layered stages but is trivially terminating by a structural
argument.

\paragraph{Type directed partial evaluation} (or normalization by evaluation) is
a way to compute the canonical forms by using the evaluation mechanism of the host
language whilst exploiting the available type information to retrieve terms from
the semantical objects. It was introduced by Berger and Schwichtenberg~\cite{BerSch91}
in order to have an efficient normalization procedure for \minlog{}. It has since been
largely studied in different settings:

Danvy's lecture notes~\cite{TypeDirected} review its foundations and presents its
applications as a technique to get rid of static redexes when compiling a program.
It also discusses various refinements of the na\"{i}ve approach such as the introduction
of let bindings to preserve a call-by-value semantics or the addition of extra
reduction rules\footnote{E.g. $n + 0 \leadsto n$ in a calculus where $\_+\_$ is
defined by case analysis on the first argument and this expression is therefore
stuck.} to get cleaner code generated. Our $\nu$-rules are somehow reminiscent of
this approach.

T. Coquand and Dybjer~\cite{CoqDyb97} introduced a glued model recording the partial
application of combinators in order to be able to build the reification procedure
for a combinatorial logic. In this case the na\"{i}ve approach is indeed problematic
given that the \texttt{SK} structure is lost when interpreting the terms in the
na\"{i}ve model and is impossible to get back. This was of great use in the design of
a model outside the scope of this paper computing only weak-head normal forms~\cite{ForgeAcquire}.

C. Coquand~\cite{Coquand02} showed in great details how to implement and prove
sound and complete an extension of the usual algorithm to a simply-typed
lambda calculus with explicit substitutions. This development guided our
correctness proofs.

More recently Abel et al.~\cite{NbeDep1, NbeDep2} built extensions able to
deal with a variety of type theories. Last but not least Ahman and
Staton~\cite{NbeEffects, NbeEffects2} explained how to treat calculi equipped
with algebraic effects which can be seen as an extension of the calculus of
Watkins et al.~\cite{DBLP:conf/types/WatkinsCPW03} extending judgmental
equality with equations for concurrency or Filinski's computational
$\lambda$-calculus.~\cite{Filinski2001Normalization}

\begin{remark}We will call $\normal$ the typing derivations restricted to standard
values as per the previous section's definitions and $\neutral$ the corresponding
ones for standard neutrals. Standard list will be silently embedded in standard
values: the separation of $s$ and $v$ is an important vestige of the syntactic
category $l$ of stuck lists but inlining it in the grammar yields exactly the
same set of terms.
\end{remark}

\begin{remark}Following \agda{}'s color scheme, function names and type
constructors will be typeset in {\color{NavyBlue}blue}, constructors
will appear in {\color{ForestGreen}green} and variables will be left black.
\end{remark}

\begin{definition}[Model] The model is defined by induction on the type using an auxiliary
inductive definition parametric in its arguments --which guarantees that the
definition is strictly positive therefore meaningful-- to give a semantical account
of lists. One should remember that the calculus enjoys $\eta$-rules for unit,
product and arrow types; therefore the semantical counterpart of terms with such
types need not be more complex than unit, pairs and actual function spaces.
$$\begin{array}{@{\NBE(\Gamma,~}l@{~)~}cl}
  \_ & : & \type \rightarrow \set \\
  \tyunit  & = & \top\\
  \tybase  & = & \neutral[\tybase]\\
  \typrod  & = & \model[\sigma] \times \model[\tau]\\
  \tyarrow & = & \forall \Delta, \incl \rightarrow \model[\sigma][\Delta] \rightarrow \model[\tau][\Delta]\\
  \tylist  & = & \modellist[\sigma][\modellistfun]\\
\end{array}$$
Standardization may trigger new reductions and we have therefore the obligation
to somehow store the computational power of the functions part of stuck maps. This
is a bit tricky because the domain type of such functions is nowhere related to
the overall type of the expression meaning that no induction hypothesis can be used.
Luckily these new computations are only ever provoked by neutral terms: they come
from function compositions caused by map or map-fold fusions.
\begin{mathpar}
\inferrule{\Gamma \colon \context \\ \sigma \colon \type \\ \modellistfunabs \colon \context \rightarrow \set}
{\modellist \colon \set}
\and \inferrule{ }{\tenil : \modellist}
\and \inferrule{\mathit{HD} \colon \modellistfunabs(\Gamma) \\ \mathit{TL} \colon \modellist}
{\mathit{HD} \tecons \mathit{TL} \colon \modellist}
\and \inferrule{\mathit{F} \colon \forall \Delta, \incl \rightarrow \neutral[\tau][\Delta] \rightarrow \modellistfunabs(\Delta) \\
\mathit{xs} \colon \neutral[\protect{\tylist[\tau]}] \\ \mathit{YS} \colon \modellist}
{\temappend[\mathit{F}][\mathit{xs}][\mathit{YS}] \colon \modellist}
\end{mathpar}
\end{definition}
\begin{remark}One should notice the Kripke flavour of the interpretation of
function types. It is exactly what is needed to write down a weakening operation
thus giving the entire model a Kripke-like structure.
\end{remark}
\begin{definition}[Reify and reflect] Mutually defined processes allow normal forms
$\normal$ to be extracted from elements of the model $\model$ whilst neutral
forms $\neutral$ can be turned into elements of the model.
\end{definition}
\begin{proof} Both $\reify \colon \model \rightarrow \normal$ and $\reflect \colon
\neutral \rightarrow \model$ are defined by induction on their type index $\sigma$.
\paragraph{Unit, base and product types} The unit case is trivial: the
reification process returns $\tett$ while the reflection one produces the
only inhabitant of $\top$. The base type case is solved by the embedding of
neutrals into normals on one hand and by the identity function on the other
hand. The product case is simply discharged by invoking the induction
hypotheses: the reification is the pairing of the reifications of the subterms
while the reflection is the reflection of the $\eta$-expansion of the stuck
term. We can now focus on the more subtle cases.

\paragraph{Arrow type} The function case is obtained by $\eta$-expansion both
at the term level (the normal form will start with a $\telam$) and the
semantical level (the object will be a function). It is here that the fact
that the definitions are mutual is really important.
\begin{align*}
\reify[\tyarrow] F &\eqdef \telam x. \reify[\tau] F (\_, \reflect x) \\
\reflect[\tyarrow] f &\eqdef \lambda \Delta~inc~x.\reflect[\tau](\weaken(f) \teapp \reify[\sigma] x)
\end{align*}

\paragraph{Lists} The list case is dealt with by recursion on the semantical
list for the reification process and a simple injection for the reflection
case. We write $\listreify$ and $\listreflect$ for the helper functions
performing reification and reflection on lists of type $\tylist$.
$$\begin{array}{@{\listreify}l@{~\eqdef~}l}
\tenil                                  & \tenil \\
\mathit{HD} \tecons \mathit{TL}         & \reify \mathit{HD} \tecons \listreify \mathit{TL} \\
\temappend[f][\mathit{xs}][\mathit{YS}] & \temappend[\telam x.\reify f(x)][\mathit{xs}][\listreify \mathit{YS}]
\end{array}$$
This injection corresponds to applying the identity functor and monoid law. Indeed
$\lambda \Delta \_. \reflect$ denotes the identity function and has the appropriate
type $\forall \Delta, \incl \rightarrow \neutral[\sigma][\Delta] \rightarrow
\model[\sigma][\Delta]$
to fit in the semantical list mapp constructor.
$$\listreflect xs ~\eqdef~ \temappend[\lambda \Delta \_. \reflect][\mathit{xs}][\tenil]$$
\end{proof}

\begin{example} of $\eta\nu$-expansions provoked by the reflect / reify functions:
for $\mathit{xs}$ a neutral list of type $\gdef\pub{\typrod[\tyunit][\hspace{-3pt}\tybase]}\def\Tpub{\tylist[(\pub)]}\Tpub$,
we get an expanded version by drowning it in the model and reifying it back:
$$\listreify[\pub] (\listreflect[\pub] \mathit{xs}) =
\temappend[\telam p. (\tett \tepair \tepide p)][\mathit{xs}][\tenil]$$
This showcases the $\eta$-expansion of unit, products and functions as well as the
use of the identity laws mentioned during the definition of $\listreflect$.
\end{example}

Proving that every term can be normalized now amounts to proving the existence of
an evaluation function producing a term $T$ of the model $\model[\sigma][\Delta]$
given a well-typed term $t$ of the language $\term$ and a semantical environment
$\modele$. Indeed the definition of the reflection function $\reflect$ together
with the existence of environment weakenings give us the necessary machinery to
produce a diagonal semantical environment $\modele[\Gamma][\Gamma]$ which could
then be fed to such an evaluation function.

In order to keep the development tidy and have a more modular proof of correctness,
it is wise to give this evaluation function as much structure as possible. This is
done through a multitude of helper functions explaining what the semantical
counterparts of the usual combinators of the calculus (except for lambda which,
integrating a weakening to give the model its Kripke structure, is a bit special)
ought to look like.
\begin{figure*}
\small
\begin{tabular}{lr}
\begin{minipage}{0.45\textwidth}$$
\begin{array}{l@{~\vappend\mathit{ZS}=\,}l}
\tenil                          & \mathit{ZS} \\
\mathit{HD} \tecons \mathit{TL} & \mathit{HD} \tecons (\mathit{TL} \vappend \mathit{ZS}) \\
\temappend[F][\mathit{xs}][\mathit{YS}]
                                & \temappend[\mathit{F}][\mathit{xs}][(\mathit{YS} \vappend \mathit{ZS})]
\end{array}$$

$$\begin{array}{@{\vmap~\mathit{F}~}l@{~=~}l}
\tenil                          & \tenil \\
(\mathit{HD} \tecons \mathit{TL}) & F (\_, \mathit{HD}) \tecons \vmap F \, \mathit{TL} \\
(\temappend[G][\mathit{xs}][\mathit{YS}])
                                & \temappend[F \vcomp G][\mathit{xs}][\vmap F \, \mathit{YS}] \\
\multicolumn{2}{l}{\text{where}~ F ~\vcomp~ G = \lambda E~\mathit{inc}~t. F (\mathit{inc}, G(\mathit{inc},t))}
\end{array}$$

$$\begin{array}{l@{~=~}l}
\vfold~~C~N~\tenil                            & N \\
\vfold~~C~N~(\mathit{HD} \tecons \mathit{TL}) & C (\_,\mathit{HD},\_, \vfold~C~N~\mathit{TL}) \\
\vfold_{\tau} C~N~(\temappend[F][\mathit{xs}][\mathit{YS}])
                                  & \reflect[\tau] \tefold (c , n , \mathit{xs}) \\
\multicolumn{2}{l}{
\begin{array}{ll@{~=~}l}
\text{where} & c & \telam x. \telam y. \reify[\tau] C(\_, F(\_, x)), \_ , \reflect[\tau] y) \\
             & n & \reify[\tau] \vfold~C~N~\mathit{YS}
\end{array}}
\end{array}$$
\end{minipage}
& \begin{minipage}{0.55\textwidth}$$
\begin{array}{@{\termeval~}l@{~R~=~}l}
x    & R(x) \\
(\telam x. t  )& \lambda E~\mathit{inc}~S. \termeval~t ~(\weakene(R) , x \mapsto S) \\
(f \teapp x   )& (\termeval~f~R) (\_, \termeval~x~R) \\
(\tett        )& \teunit \\
(a \tepair b  )& \termeval~a~R, \termeval~b~R\\
(\tepiun t    )& \piun (\termeval~t~R) \\
(\tepide t    )& \pide (\termeval~t~R) \\
(\tenil       )& \tenil \\
(\mathit{hd} \tecons \mathit{tl})& (\termeval~\mathit{hd}~R) \tecons (\termeval~\mathit{tl}~R) \\
(\mathit{xs} \teappend \mathit{ys})& (\termeval~\mathit{xs}~R) \vappend (\termeval~\mathit{ys}~R) \\
(\temap (f , \mathit{xs}) )& \vmap (\termeval~f~R) (\termeval~\mathit{xs}~R) \\
(\tefold (c , n , \mathit{xs}) )& \vfold (\termeval~c~R) (\termeval~n~R) (\termeval~\mathit{xs}~R)
\end{array}$$
\end{minipage}
\end{tabular}
\caption{Evaluation function and semantical counterparts of list primitives}
\label{evalfun}
\end{figure*}
\begin{theorem}[Evaluation function] Given a term in $\term$ and a semantical
environment in $\modele$, one can build a semantical object in $\model[\sigma][\Delta]$.
\end{theorem}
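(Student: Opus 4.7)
The plan is to define the evaluation function $\termeval$ by structural induction on the term $t \colon \term$, threading the environment $R \colon \modele$ and letting the index $\sigma$ of the target $\model[\sigma][\Delta]$ follow the typing derivation. The base and negative-type cases are direct: variables are looked up in $R$, the unit term $\tett$ goes to the unique inhabitant of $\top$, pairs and projections are handled pointwise since $\model[\typrod]$ is literally $\model[\sigma] \times \model[\tau]$, and both list constructors $\tenil$ and $\tecons$ have strict counterparts in the semantical list. These cases work essentially because the model is designed to mirror the type structure up to $\eta$.

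The only subtlety among the ``easy'' cases is $\telam x. t$. Because $\model[\tyarrow]$ is a Kripke function space, we must produce a function taking an extension $\incl[\Delta][E]$ and a semantical argument $S \colon \model[\sigma][E]$; we first weaken $R$ along the inclusion using $\weakene$, then extend the resulting environment with $x \mapsto S$, and finally invoke the induction hypothesis on $t$. Application is the dual move: we evaluate the function and the argument in the current context and then feed them through the Kripke action at the identity inclusion.

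All the remaining work is concentrated in the list primitives $\teappend$, $\temap$, and $\tefold$, which cannot just traverse the semantical list argument, because that list may end in a stuck mapp node $\temappend[F][\mathit{xs}][\mathit{YS}]$. We delegate them to the auxiliary functions $\vappend$, $\vmap$, $\vfold$ of Figure~\ref{evalfun}, each defined by structural recursion on the semantical list. For $\vappend$ and $\vmap$ this is truly routine; the monoid/functor laws are enforced by construction, with $\vmap$ pushing $F$ into the stored function via the semantical composition $\vcomp$. The interesting case is $\vfold$ against a mapp: since the fold is genuinely stuck on the neutral $\mathit{xs}$, we must reflect a fresh $\tefold$ neutral, fusing the step function $C$ with $F$ and using the recursive call on $\mathit{YS}$ as its base value, which in turn requires us to reify these semantical objects back to syntax at type $\tau$.

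I expect the main obstacle to be making these auxiliary definitions pass \agda's termination checker. The recursion on $\vfold$, $\vmap$, $\vappend$ is structural on the semantical list, but the mapp case of $\vfold$ calls $\reify[\tau]$ on objects built from $C$, $F$, and a recursive $\vfold~C~N~\mathit{YS}$, and $\reify$ and $\reflect$ are themselves mutually defined at all types; one must be careful that the mutual block is organized so termination is witnessed by lexicographic descent on the pair (type index for $\reify/\reflect$, semantical list for the $\vfold$-family), and that the syntactic fusion steps do not sneak in non-structural recursive calls to $\termeval$ itself. Once this layering is in place, all the clauses of Figure~\ref{evalfun} are accepted, and the theorem follows by collecting the clauses into a single definition by recursion on $t$.
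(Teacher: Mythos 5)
Your proposal is correct and follows essentially the same route as the paper: a structural induction on the term, with the Kripke weakening-then-extension move in the $\telam$ case and the list primitives delegated to the semantical helpers $\vappend$, $\vmap$, $\vfold$ of Figure~\ref{evalfun}, including the reify/reflect detour in the stuck-fold-on-mapp case. Your termination worry is, if anything, overstated: $\reify$ and $\reflect$ are defined beforehand by induction on the type index alone, so the helpers are plainly structural on the semantical list and $\termeval$ on the term, which is exactly the ``trivially terminating by a structural argument'' layering the paper relies on.
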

\begin{proof}A simple induction on the term to be evaluated using the semantical
counterparts of the calculus' combinators to assemble semantical objects obtained
by induction hypotheses discharges most of the goals. See Figure~\ref{evalfun}
for the details of the code.

In the lambda case, we have the body of the lambda $b$ in $\term[\tau][\Gamma \cdot \sigma]$,
an evaluation environment $R$ in $\modele$ and we are given a context $E$, a proof
\texttt{inc} that $\incl[\Delta][E]$ and an object $S$ living in $\model[\sigma][E]$.
By combining $S$ and a weakening of $R$ along \texttt{inc}, we get an evaluation
environment of type $\modele[\Gamma \cdot \sigma][E]$ which is just what we needed
to conclude by using the $\model[\tau][E]$ delivered by the induction hypothesis
on $b$.
\end{proof}

\begin{remark}Unlike traditional normalization by evaluation, reflection and
reification are used when defining the interpretation of terms in the model.
This is made necessary by the presence of syntactical artifacts (stuck lists)
in the mapp constructor. Growing the spine of stuck eliminators calls for the
reification of these eliminators' parameters and the reflection of the whole
stuck expression to re-inject it in the model.

This kind of patterns also appeared in the glueing construction introduced by
Coquand and Dybjer in their account of normalization by evaluation for
the simply-typed SK-calculus~\cite{CoqDyb97} and can be observed in other
variants of normalization by evaluation deciding more exotic equational
theories e.g. having $\beta$-reduction but no $\eta$-rules for the simply-typed
$\lambda$-calculus~\cite{ArranSlides}.
\end{remark}

\begin{remark}The only place where type information is needed is when
reorganizing neutrals following $\nu$-rules e.g. in the semantical fold.
The evaluation function is therefore faithful to the staged evaluation approach.
The model is indeed related to the algorithm presented earlier on in
section~\ref{bigstep}: we \emph{describe} all the computations eagerly for
\agda{} to see the termination argument but a subtle evaluation strategy
applied to the produced code could reclaim the behaviour of the layered
approach. It would have to form lambda closures in the arrow case, fire
eagerly only the reductions eliminating constructors in the $\vmap$,
$\vappend$ and $\vfold$ helper functions thus postponing the execution
of the code corresponding to $\eta\nu$-rules to reification time.
\end{remark}

\begin{corollary}There is a normalization function $\norm$ turning terms in
$\term$ into normal forms in $\normal$.
\end{corollary}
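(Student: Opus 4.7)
The plan is to assemble $\norm$ from the three ingredients already produced: the evaluation function $\termeval$, the reflection process $\reflect$, and the reification process $\reify$. Concretely, given $t \colon \term$, I will set
\[
\norm(t) \eqdef \reify[\sigma] \bigl(\termeval~t~\idmodele\bigr),
\]
where $\idmodele \colon \modele[\Gamma][\Gamma]$ is the diagonal environment sending each variable $x \colon \sigma$ in $\Gamma$ to its reflection $\reflect[\sigma] x$. The output lives in $\normal[\sigma][\Gamma]$ by the typing of $\reify[\sigma]$, so we get exactly the required normal form.

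The first step is to build $\idmodele$. I would proceed by induction on $\Gamma$: the empty context case is trivial (there is no variable to interpret), and for $\ConExtend[\Gamma'][(x \colon \sigma)]$ I would weaken the recursively obtained $\idmodele[\Gamma']$ along the canonical inclusion $\incl[\Gamma'][\protect{\ConExtend[\Gamma'][(x \colon \sigma)]}]$ using the Kripke weakening operation on the model (available because every $\model[\tau]$ is naturally functorial in context inclusion, as signalled by the Kripke-style function case), and then append the component $x \mapsto \reflect[\sigma] x$. This mirrors the classical NbE idiom: each variable is drowned in the model via $\reflect$ so that later eliminators can grow neutral spines correctly.

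Once $\idmodele$ is in hand, invoking $\termeval~t~\idmodele$ (the evaluation function from the previous theorem) produces a semantic object $T \colon \model[\sigma][\Gamma]$, and finally $\reify[\sigma] T$ hands us an element of $\normal[\sigma][\Gamma]$, completing the definition. Every piece is a total \agda{} function with an obvious structural termination argument --- the induction on $\Gamma$ to build $\idmodele$, the induction on $t$ inside $\termeval$, and the induction on $\sigma$ (mutual with $\reflect$) inside $\reify$ --- so there is no termination obligation to discharge separately.

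The main delicate point is not the construction itself but making sure the weakening machinery used to propagate $\idmodele$ through context extensions is coherent with the Kripke structure embedded in the model's function and list cases (notably the weakening of the stored function $F$ inside the $\temappend$ constructor). Any slip here would break the typing of $\reflect$ applied to the fresh variable, but this is just the routine bookkeeping of order-preserving embeddings introduced earlier. Correctness of $\norm$ with respect to the source equational theory is not required by the corollary itself; it is deferred to the soundness and completeness results of Section~\ref{correctness-proof}.
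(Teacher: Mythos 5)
Your construction is exactly the paper's: the paper also defines $\norm t \eqdef \reify (\termeval (t , \reflecte \Gamma))$, where the diagonal environment $\modele[\Gamma][\Gamma]$ is obtained from reflection of variables together with environment weakenings, precisely as you describe. Your additional detail on building the diagonal environment by induction on $\Gamma$ is just an unfolding of what the paper states before the corollary, so the approaches coincide.
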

\begin{proof} Given $t$ a term of type $\term$ and $\reflecte$ the function
turning a context $\Gamma$ into the corresponding diagonal semantical
environment $\modele[\Gamma][\Gamma]$, the normalization procedure is given
by the composition of evaluation and reification:
$$\norm t \eqdef \reify (\termeval (t , \reflecte \Gamma))$$
\end{proof}


\section{Correctness}
\label{correctness-proof}

The typing information provided by the implementation language guarantees that
the procedure computes terms in normal forms from its inputs and that they have
the same type. This is undoubtedly a good thing to know but does not forbid all
the potentially harmful behaviours: the empty list is a type correct normal form
for any input of type list but it certainly is not a satisfactory answer with
respect to $\rules$-equality. Hence the need for a soundness and a completeness
theorem tightening the specification of the procedure.

The meta-theory is an ad-hoc extension of the techniques already well explained
by Catarina Coquand~\cite{Coquand02} in her presentation of a simply-typed lambda
calculus with explicit substitutions (but no data). Soundness is achieved through
a simple logical relation while completeness needs two mutually defined notions
explaining what it means for elements of $\NBE$ to be semantically equal and to
behave uniformly on extensionally equal terms.

The reader should think of these logical relations as specifying requirements for
a characterization (being equal, being uniform) to be true of an element at some
type. The natural deduction style presentation of these recursive functions should
then be quite natural for her: read in a bottom-top fashion, they express that
the (dependent) conjunction of the hypotheses -- the empty conjunction being $\top$--
is the requirement for the goal to hold. Hence leading to a natural interpretation:
\begin{mathpar}\inferrule{}{\inferrule{A \\ B \\ C}{F(t)} \\ {\huge\leadsto} \\ F(t) = A \times B \times C}\end{mathpar}

\subsection{Soundness}

Soundness amounts to re-building the propositional part of the reducibility
candidate argument~\cite{Girard06} which has been erased to get the bare bones
model. The logical relation $\sound$ relates a semantical object $T$ in $\model$
and a term $t$ in $\term$ which is morally the source of the semantical object.

\begin{definition}[Logical Relation for Soundness]$\sound$ is defined by
induction on the type $\sigma$ plus an appropriate inductive definition for
the list case $\soundlist$. Here are the formation rules of these types.

\begin{mathpar}
\inferrule{\mathit{t} \colon \term \\
  \mathit{T} \colon \model}
{\sound \colon \set}

\and \inferrule{\mathit{xs} \colon \term[\tylist] \\
  \mathit{XS} \colon \modellist \\
  \soundlistfunabs[\_][\_] \colon \forall \Gamma, \term \rightarrow
    \modellistfunabs \Gamma \rightarrow \set}
{\soundlist \colon \set}
\end{mathpar}

\begin{remark}
It should be no surprise to the now experienced reader that the inductive
definition of the logical relation for $\tylist$ is parametrized by
$\soundlistfunabs[\_][\_]$, the logical relation for elements of type
$\sigma$ which will be lifted to lists, simply to avoid positivity problems.
It is ultimately instantiated with the logical relation taken at type
$\sigma$.

She will also have noticed that the uses of both $\color{NavyBlue}{\mathcal{M}}$
and $\color{NavyBlue}{\mathcal{L}}$ on the left of $\color{NavyBlue}{\ni}$ are
but syntactical artifacts to hint at the connection with the model definition.
Hence the different arity in the case of the logical relation for lists.
\end{remark}

\paragraph{Unit, base and product types} The unit and base type cases are, as
expected, the simplest ones and the product case is not very much more exciting:
\begin{mathpar}
\inferrule{ }{\sound[\tyunit]}
\and \inferrule{t \reduces T}{\sound[\tybase]}
\end{mathpar}
$$\inferrule{a \colon \term \\ b \colon \term[\tau] \\ t \reduces a \tepair b
\\ \sound[\sigma][a][A] \\ \sound[\sigma][b][B]}{\sound[\typrod][t][A , B]}$$

\paragraph{Arrow type} Function types on the other hand give rise to a
Kripke-like structure in two ways: in addition to the quantification on all
possible future context which we need to match the model construction, there
is also a quantification on all possible source term reducing to the current
one.
$$\inferrule{\forall \Delta (inc \colon \incl) ~ x ~ X,
    \sound[\sigma][x][X][\Delta] \rightarrow \\
    \forall t, t \reduces \weaken f \teapp x \rightarrow
    \sound[\tau][t][F (\mathit{inc} , X)][\Delta]}
{\sound[\tyarrow][f][F]}$$

\paragraph{Lists} The cases for nil and cons are simply saying that the source
term indeed reduces to a term with the corresponding head-constructors and that
the eventual subterms are also related to the sub-objects.
\begin{mathpar}
\inferrule{t \reduces \tenil}{\soundlist[t][\tenil]}
\and \inferrule{t \reduces \mathit{hd} \tecons \mathit{tl} \\
  \soundlistfunabs[\mathit{hd}][\mathit{HD}] \\ \soundlist[\mathit{tl}][\mathit{TL}]}
{\soundlist[t][\mathit{HD} \tecons \mathit{TL}]}
\end{mathpar}
The mapp case is a bit more complex. The source term is expected to reduce to
a term with the same canonical shape and then we expect the semantical function
to behave like the one discovered.
$$\inferrule{t \reduces \temap (f , \mathit{xs}) \teappend \mathit{ys} \\
\forall \Delta (\mathit{inc} \colon \incl) ~t \rightarrow \soundlistfunabs[\weaken(f) \teapp t][F (\mathit{inc} , t)] \\
\soundlist[\mathit{ys}][\mathit{YS}]}{\soundlist[t][\protect{\temappend[F][\mathit{xs}][\mathit{YS}]}]}$$
\end{definition}

The first thing to notice is that whenever two objects are related by this
logical relation then the property of interest holds true i.e. the semantical
object indeed is a reduct of the source term. This result which mentions the
reifying function has to be proven together with the corresponding one about
the mutually defined reflection function.

\begin{definition}[Pointwise extension] We denote by $\sounde[\_][\_][\_][\_]$
the pointwise extension of the soundness logical relation to parallel
substitutions and semantical environments.
\end{definition}

\begin{lemma}Reflect and reify are compatible with this logical relation in the
sense that:
\begin{enumerate}
  \item If $t_{\mathit{ne}}$ is a neutral $\neutral$ then
     $\sound[\sigma][t_{\mathit{ne}}][\reflect t_{\mathit{ne}}]$.
  \item If $t$ and $T$ are such that $\sound$ then $t \reduces \reify T$
\end{enumerate}
\end{lemma}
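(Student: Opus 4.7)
The plan is to prove both parts simultaneously by induction on the type $\sigma$. The mutual structure is forced by the arrow case: reifying a function introduces a fresh variable and reflects it (needing part 1 at the domain type), while reflecting a neutral function eventually reifies its arguments (needing part 2 at the domain type); both recursive calls happen at structurally smaller types, so the induction is well-founded. Before starting, I would prove two auxiliary technical lemmas that will be used throughout: stability of $\sound$ under weakening (so that when we move to an extended context in the arrow and mapp cases we can still use the hypotheses) and closure of $\sound$ under the inverse of $\reduces$ on the term component (so we can prepend any reduction step to produce a fresh source term).

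The base cases are immediate: at $\tyunit$ both statements are vacuous; at $\tybase$ the reflect map is the embedding of neutrals into the model and $\reify$ is the embedding of neutrals back into normals, so part 1 holds by reflexivity of $\reduces$ and part 2 is exactly the hypothesis. The product case unpacks and reassembles: part 1 uses $t_{\mathit{ne}} \reduces \tepiun t_{\mathit{ne}} \tepair \tepide t_{\mathit{ne}}$ (product $\eta$) and applies IH.1 to each projection; part 2 takes the stored witnesses and pairs their reified reducts. The arrow case is the archetypal NBE argument: for part 1, given $\sound[\sigma][x][X]$ and $t \reduces \weaken(f_{\mathit{ne}}) \teapp x$, invoke IH.2 at $\sigma$ to get $x \reduces \reify X$, compose to obtain $t \reduces \weaken(f_{\mathit{ne}}) \teapp \reify X$, and then apply IH.1 at $\tau$ to the neutral $\weaken(f_{\mathit{ne}}) \teapp \reify X$; for part 2, apply the given hypothesis at a freshly introduced variable $x$ -- using IH.1 at $\sigma$ to produce $\sound[\sigma][x][\reflect x]$ -- obtain $f \teapp x \reduces \reify F(\_, \reflect x)$ by IH.2 at $\tau$, and close with an $\eta$-expansion at the arrow type.

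The list case requires an inner induction on the semantical list for part 2 and a careful use of the identity laws for part 1. For part 2, the cases $\tenil$ and $\mathit{HD} \tecons \mathit{TL}$ are discharged by chaining the stored reductions with cons-/nil-preserving congruence applied to IH.2 on the head and the inner IH on the tail. The mapp case $\temappend[F][\mathit{xs}][\mathit{YS}]$ uses the stored reduction $t \reduces \temap(f,\mathit{xs}) \teappend \mathit{ys}$ and the function-like hypothesis on $F$: instantiating the latter at the fresh variable, then applying IH.2 at the element type $\sigma$, yields that $f$ reduces to the appropriate lambda wrapping $\reify \circ F$, and the inner IH handles $\mathit{YS}$. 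For part 1, reflecting a neutral list $\mathit{xs}_{\mathit{ne}}$ produces $\temappend[\lambda\Delta\_.\reflect][\mathit{xs}_{\mathit{ne}}][\tenil]$, and we must show $\soundlist[\mathit{xs}_{\mathit{ne}}][\ldots]$: the required reduction $\mathit{xs}_{\mathit{ne}} \reduces \temap(\texttt{id}, \mathit{xs}_{\mathit{ne}}) \teappend \tenil$ is exactly the right-to-left orientation of the monoid and functor identity laws noted in Section~\ref{experimental-setting}, the function-behaviour obligation is discharged by IH.1 at $\sigma$ (applied to $\weaken(\mathit{xs}_{\mathit{ne}}) \teapp t$ as a neutral), and the tail obligation is the nil case.

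The main obstacle I expect is the mapp branch of part 1 for lists: it combines three delicate ingredients at once -- the right-to-left use of the identity laws, the Kripke quantification over future contexts inside $\soundlist$, and the need to appeal to the weakening stability lemma to transport the induction hypothesis under a fresh $\lambda$. Getting the orientation of $\reduces$ exactly right on this case (and keeping it consistent with what the normalization theorem will later require) is where most of the bookkeeping will concentrate; everything else is the standard mutually-recursive NBE reducibility argument.
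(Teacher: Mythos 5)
Your proposal is correct and follows essentially the same route as the paper's (formalized) proof: mutual induction on the type, with stability under weakening and closure under anti-reduction as auxiliary lemmas, the Kripke instantiation at a fresh variable in the arrow case, and the right-to-left orientation of the identity laws in the list-reflection case. Two local slips are worth repairing: part 2 at $\tyunit$ is not vacuous --- you need the unit $\eta$-rule to conclude $t \reduces \tett$ --- and in the list-reflection case the function-behaviour obligation concerns the $\beta$-redex $\weaken(\telam x.x) \teapp t$, which is discharged by IH.1 on the neutral $t$ together with your anti-reduction lemma, not by IH.1 applied to ``$\weaken(\mathit{xs_{ne}}) \teapp t$ as a neutral'' (that term is ill-typed and in any case not neutral).
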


The Kripke-style structure we mentioned during the definition of the logical
relation adds just what is need to have it closed under anti-reductions of the
source term:
\begin{proposition}For all $s$ and $t$ in $\term$, if $s \reduces t$ then
for all $T$ such that $\sound$, it is also true that $\sound[\sigma][s]$
\end{proposition}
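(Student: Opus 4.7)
The plan is to proceed by induction on the type $\sigma$, with an inner induction on the $\soundlist$ derivation for the list case. In every situation the semantical witness $T$ is reused unchanged; the only new ingredient is to compose the extra reduction step $s \reduces t$ with the reduction already recorded by $\sound[\sigma][t][T]$, exploiting transitivity of $\reduces$.

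The unit and base cases are immediate: the unit clause is trivially true, and for $\tybase$ the relation is literally $t \reduces T$, so transitivity yields $s \reduces T$. The product case reuses the witnesses $a, b$ with $t \reduces a \tepair b$ and promotes this to $s \reduces a \tepair b$ by transitivity, keeping the recursively-given soundness of the components. For the list case, the three subcases $\tenil$, $\mathit{HD} \tecons \mathit{TL}$, and $\temappend[F][\mathit{xs}][\mathit{YS}]$ follow the same transitivity pattern; the mapp subcase additionally receives the Kripke-style treatment described next for the arrow case.

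The arrow case is the technically interesting one because the Kripke-style definition quantifies over all $r$ with $r \reduces \weaken f \teapp x$ rather than over $f$ itself. Given $\sound[\tyarrow][f][F]$ (renaming the hypothesis's $t$ to $f$), I would fix $\Delta$, $\mathit{inc} \colon \incl$, related $x$ and $X$, and an arbitrary source $r$ with $r \reduces \weaken s \teapp x$; the goal is $\sound[\tau][r][F(\mathit{inc}, X)]$. The key subsidiary lemma is stability of $\reduces$ under weakening: from $s \reduces f$ one obtains $\weaken s \reduces \weaken f$, whence $\weaken s \teapp x \reduces \weaken f \teapp x$ by the congruence closure of $\reduces$. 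Transitivity then delivers $r \reduces \weaken f \teapp x$, so the hypothesis on $f$ discharges the goal.

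The only real obstacle I anticipate is exactly that preservation of $\reduces$ under weakening (also needed for the mapp subcase, whose semantical function $F$ carries an analogous Kripke quantifier), but this is a routine structural induction on the reduction derivation, belonging to the basic infrastructure established alongside $\reduces$ itself. No inductive hypothesis on $\sigma$ is required when descending into the spine of a $\soundlist$ derivation because the parameter $\soundlistfunabs$ is treated opaquely; everything else is a matter of repackaging existing data and invoking transitivity.
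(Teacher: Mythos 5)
Your proof is correct and takes essentially the route the paper only sketches: closure under anti-reduction comes from transitivity of $\reduces$ in every clause, with the arrow case discharged via the Kripke-style quantification over source terms together with stability of $\reduces$ under weakening and congruence for application. One small simplification you could make: in the mapp case the premise about the semantical function $F$ is stated in terms of the function $f$ appearing in the reduct, not in terms of the source term, so it (and the $\mathit{YS}$ premise) carries over verbatim and needs neither the weakening lemma nor any Kripke-style argument --- only the top-level reduction is recomposed by transitivity.
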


The proof of soundness then mainly involves showing that the semantical counterparts
of the language's combinators we defined during the model construction are compatible
with the logical relation. Namely that e.g. if $\sound[\tyarrow][f][F]$ and
$\sound[\tylist][xs][XS]$ hold then it is also true that:
$\sound[\protect{\tylist[\tau]}][\temap(f ,xs)][\temap(F , XS)]$.

\begin{theorem}Given a term $t \colon \term$, a parallel substitution $\rho \colon \subst$
and an evaluation environment $R$ such that $\rho$ and $R$ are related ($\sounde$
holds), the evaluation of $t$ in $R$ is related to $t [\rho]$:
$\sound[\sigma][\protect{t[\rho]}][\termeval (t , R)][\Delta]$
\end{theorem}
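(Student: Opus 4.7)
The plan is to proceed by structural induction on the term $t$, using the compatibility lemmas for the semantical combinators of the language as the main workhorses. Before starting the induction, I would prove (or assume, since they are largely routine) the compatibility lemmas: for every semantic combinator $\vappend$, $\vmap$, $\vfold$, semantic application, pairing, and projection, if the inputs are related to their syntactic counterparts by $\lightning$, then the output of the combinator is related to the application of the corresponding object-level eliminator. For example, if $\sound[\tyarrow][f][F]$ and $\sound[\sigma][x][X]$, then $\sound[\tau][f \teapp x][F(\idinc,X)]$; and analogously $\sound[\protect{\tylist[\tau]}][\temap(f,xs)][\vmap~F~XS]$ when the arguments are related. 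These compatibility lemmas will typically need the anti-reduction closure proposition already stated, as well as the reify/reflect compatibility lemma.

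With these in hand, the induction on $t$ becomes mostly mechanical. The variable case is discharged immediately by the pointwise hypothesis $\sounde[\rho][R]$. For $\telam x. b$, I would unfold the arrow clause of the logical relation: given $\Delta \supseteq \Gamma$ via $\mathit{inc}$, a pair $\sound[\sigma][x][X][\Delta]$, and some $t$ reducing to the weakened application, I extend $\rho$ and $R$ by $x \mapsto x$ and $x \mapsto X$ respectively, check that the extended substitution and environment are still pointwise related (which uses that the soundness relation is stable under weakening, a sublemma one verifies by induction on the type), and appeal to the induction hypothesis on $b$; closure under anti-reduction absorbs the $\beta$-step from $(\telam x. b)[\rho] \teapp x$ to $b[\rho, x \mapsto x]$. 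The application, pairing, projection, unit, nil and cons cases all follow by applying the induction hypothesis to each subterm and invoking the matching compatibility lemma.

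The list-primitive cases $\teappend$, $\temap$ and $\tefold$ are where the work actually sits. For $\teappend$ and $\temap$, the compatibility lemmas for $\vappend$ and $\vmap$ are proved by induction on the semantic list (i.e.\ on the $\NBEL$ derivation), with the mapp subcase requiring one to reason about the $\nu$-rule fusing two maps, using the Kripke quantification baked into the mapp clause of the list logical relation. The $\tefold$ case is the main obstacle: $\vfold$ on a mapp node does not recurse but instead reifies a composition and reflects a stuck $\tefold$. To relate this to the source-level $\tefold(c,n,xs)$, I have to exhibit a reduction sequence from $\tefold(c,n,xs)$ to the reflected form, which is exactly where the fold/map-fusion and fold/append-fusion $\nu$-rules are consumed, and I need the reify/reflect lemma to turn the semantic composition back into a syntactic reduct. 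This is the delicate step, because one has to juggle the Kripke quantifications of $F$, the soundness data for $C$ and $N$, and the anti-reduction closure all at once; it will unfold cleanly provided the compatibility lemma for $\vfold$ is stated with the right Kripke-style hypotheses on $C$ and $F$.

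Once the $\vfold$ compatibility lemma is established, the overall induction closes, and instantiating with the diagonal substitution and environment $\reflecte \Gamma$ (combined with the reflect half of the reify/reflect lemma) will in a corollary give the soundness of $\norm$ itself.
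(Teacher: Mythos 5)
Your proposal follows essentially the same route as the paper: structural induction on the (intrinsically typed) term, the variable case discharged by the pointwise relation $\sounde[\rho][R]$, all non-$\lambda$ cases by induction hypotheses plus the compatibility lemmas for the semantical combinators ($\vappend$, $\vmap$, $\vfold$, application, pairing, projections), and the $\lambda$-case handled by extending the weakened substitution and environment with the given related pair and absorbing the $\beta$-step via closure under anti-reduction. Your additional sketch of how the $\vfold$/mapp compatibility lemma consumes the fusion $\nu$-rules via reify/reflect is consistent with (and more detailed than) what the paper records, so the proposal is correct and matches the paper's proof.
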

\begin{proof}The theorem is proved by structural induction on the shape of the
typing derivation of $t$. The variable case is trivially discharged by using the
proof of $\sounde$.

All the other cases -- except for the lambda one -- can be solved by combining
induction hypotheses with the appropriate lemma proving that the corresponding
semantical combinator respects the logical relation.

In the case where $t = \telam x. b$, we are given a context $E$ together with a
proof $inc$ that it is an extension of $\Delta$, a term $u$ and an object $U$
which are related $\sound[\sigma][u][U][E]$ and, finally, a term $s \colon \term[\tau][E]$
which reduces to $(\telam x. b) [\rho] \teapp u$.
First of all, we should notice that $s \reduces b [\rho , x \mapsto u]$
and therefore that to prove $\sound[\tau][s][T][E]$ it is enough to prove that
$\sound[\tau][\protect{b [\rho , x \mapsto u]}][T][E]$. And we get just that by
using the induction hypothesis with the related parallel substitution $\rho'$ and
evaluation environment $R'$ obtained by the combination of the weakening of $\rho$
(resp. $R$) along $inc$ with $u$ (resp. $U$).
\end{proof}

\begin{corollary}A term $t$ reduces to the normal form produced by the
normalization by evaluation procedure: $t \reduces \norm t$. And if two terms
$t$ and $u$ have the same normal form up-to $\alpha$-equivalence then they are
indeed related: $t \conv u$.
\end{corollary}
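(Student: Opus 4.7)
The plan is to derive both statements from the soundness theorem combined with the reify/reflect lemma. For the first claim, I would instantiate the soundness theorem at the input term $t$, choosing the parallel substitution $\rho$ to be the identity $\idsubst$ and the evaluation environment $R$ to be the diagonal $\reflecte \Gamma$ which appears in the very definition of $\norm$. The resulting conclusion reads $\sound[\sigma][\protect{t[\idsubst]}][\termeval(t, \reflecte \Gamma)]$; since $t[\idsubst] = t$, this simplifies to $\sound[\sigma][t][\termeval(t, \reflecte \Gamma)]$. Invoking part 2 of the reify/reflect lemma immediately turns this into $t \reduces \reify(\termeval(t, \reflecte \Gamma))$, which by the definition of $\norm$ is exactly $t \reduces \norm t$.

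The non-trivial side condition to discharge is $\sounde[\idsubst][\reflecte \Gamma]$. Unpacking the pointwise extension of the soundness relation to substitutions and environments, this reduces to showing, for each variable $x$ of type $\sigma$ in $\Gamma$, that $x$ is related to $\reflect x$ by the soundness logical relation at type $\sigma$. Since every variable is a neutral term, this is exactly part 1 of the reify/reflect lemma. A short structural induction on $\Gamma$, using that weakening preserves the logical relation and that variables remain neutral under weakening, assembles these pointwise facts into the required global statement.

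For the second claim, suppose $\norm t$ and $\norm u$ coincide up to $\alpha$-equivalence, which in the de Bruijn representation used in the formalization is literal equality. The first claim yields $t \reduces \norm t$ and $u \reduces \norm u$. Since $\conv$ is the congruence closure of $\reduces$, every $\reduces$-chain is a fortiori a $\conv$-chain, so $t \conv \norm t$ and $u \conv \norm u$. Symmetry and transitivity around the shared normal form then give $t \conv u$.

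The main obstacle is upstream, in the soundness theorem itself; once that is available, the corollary is essentially bookkeeping. The delicate spot within the present argument is the verification of $\sounde[\idsubst][\reflecte \Gamma]$: one must check that the identity substitution really sends each variable to itself, so that $t[\idsubst] = t$ holds definitionally, and that the Kripke structure of the soundness relation interacts correctly with the weakenings implicit in building $\reflecte \Gamma$ by pointwise reflection.
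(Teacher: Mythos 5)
Your proposal is correct and follows essentially the same route as the paper: instantiate the soundness theorem with the identity substitution and the diagonal environment built by reflection (related pointwise via part 1 of the reify/reflect lemma, since variables are neutral), use $t[\idsubst] = t$, and conclude $t \reduces \norm t$ by part 2 of that lemma; the second claim then follows by gluing the two reduction facts through the shared normal form using symmetry and transitivity of $\conv$. The paper's own proof is just a terser statement of exactly this argument.
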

\begin{proof}The identity parallel substitution is related to the diagonal evaluation
environment and $t [\idsubst]$ is equal to $t$ hence, by the previous theorem,
$\sound[\sigma][t][\termeval(t , \idmodele)]$ and then $t \reduces \norm t$.
\end{proof}

\subsection{Completeness}

Completeness can be summed up by the fact that the interpretation of
$\rules$-convertible elements produces semantical objects behaving similarly.
This notion of similar behaviour is formalized as \emph{semantic equality}
where, in the function case, we expect both sides to agree on any \emph{uniform}
input rather than any element of the model. As usual the list case is dealt
with by using an auxiliary definition parametric in its "interesting" arguments.

\begin{definition} The semantic equality of two elements $T, U$ of $\model$
is written $\exteq$ while $T \in \model$ being uniform is written $\uniform$.
They are both mutually defined by induction on the index $\sigma$ in
Figure~\ref{logreleq}.
\begin{figure*}
\begin{mathpar}
\inferrule{ }{\exteq[\tyunit]}
\and \inferrule{ }{\uniform[\tyunit]}

\and \inferrule{T = U}{\exteq[\tybase]}
\and \inferrule{ }{\uniform[\tybase]}

\and \inferrule{\exteq[\sigma][A][C] \\ \exteq[\tau][B][D]}{\exteq[\typrod][(A , B)][(C , D)]}
\and \inferrule{\uniform[\sigma][A] \\ \uniform[\tau][B]}{\uniform[\typrod][(A , B)]}

\and \inferrule{\forall \Delta (inc : \incl) (S : \model[\sigma][\Delta]) \rightarrow \uniform[\sigma][S] \rightarrow \exteq[\tau][F(\mathit{inc} , S)][G(\mathit{inc} , S)]}{\exteq[\tyarrow][F][G]}

\and \inferrule{\forall \Delta (inc : \incl), \uniform[\sigma][S] \rightarrow \uniform[\tau][F(inc , S)] \\
\\ \forall \Delta (\mathit{inc} : \incl) \rightarrow \uniform[\sigma][S_1] \rightarrow \uniform[\sigma][S_2] \rightarrow
\exteq[\sigma][S_1][S_2] \rightarrow \exteq[\tau][F(\mathit{inc} , S_1)][F(\mathit{inc} , S_2)] \\
\forall \mathit{inc_1}, \mathit{inc_2} \rightarrow \uniform[\sigma][S] \rightarrow
\exteq[\tau][\protect{\weaken[inc_1]~F(\mathit{inc_2} , S)}][\protect{F(\mathit{inc_2} \cdot \mathit{inc_1} , \weaken[inc_1]~ S)}]}
  {\uniform[\tyarrow][F]}

\end{mathpar}
\begin{mathpar}

\and \inferrule{ }{\tenil \colon \exteqlist[\sigma][\tenil][\tenil]}
\and \inferrule{ }{\uniform[\tylist][\tenil]}

\and \inferrule{\mathit{hd} \colon \mathit{EQ_\sigma} (X , Y) \\ \mathit{tl} \colon \exteqlist[\sigma][\mathit{XS}][\mathit{YS}]}
  {\mathit{hd} \tecons \mathit{tl} \colon \exteqlist[\sigma][X \tecons \mathit{XS}][Y \tecons \mathit{YS}]}
\and \inferrule{\uniform[\sigma][\mathit{HD}] \\ \uniform[\tylist][\mathit{TL}]}{\uniform[\tylist][\mathit{HD} \tecons \mathit{TL}]}

\and \inferrule{\mathit{xs} \colon \mathit{xs_1} \equiv \mathit{xs_2} \\
\mathit{YS} \colon \exteqlist[\sigma][\mathit{YS_1}][\mathit{YS_2}] \\
F \colon \forall \Delta (\mathit{inc} : \incl) (t : \neutral[\tau][\Delta]), \mathit{EQ_\sigma}(F_1 (\mathit{inc} , t) , F_2 (\mathit{inc} , t))}
  {\temap(F , \mathit{xs}) \teappend \mathit{YS} \colon \\ \exteqlist[\sigma][\temap(F_1 , \mathit{xs_1}) \teappend \mathit{YS_1}][\temap(F_2 , \mathit{xs_2}) \teappend \mathit{YS_2}]}

\and \inferrule{\forall \Delta (\mathit{inc} \colon \incl) (t \colon \neutral[\tau][\Delta]) , \uniform[\sigma][F(\mathit{inc} , t)] \\
\forall \mathit{inc_1}, \mathit{inc_2}, t, \exteq[\sigma][\protect{\weaken[inc_1]~F(\mathit{inc_2} , t)}][\protect{F(\mathit{inc_2} \cdot \mathit{inc_1} , \weaken[inc_1]~ t)}] \\
\uniform[\tylist][\mathit{YS}]}
{\uniform[\tylist][\temap(F , \mathit{xs}) \teappend \mathit{YS}]}
\end{mathpar}
\caption{Semantic equality and uniformity of objects in the model}
\label{logreleq}
\end{figure*}

Quite unsurprisingly, the unit case is of no interest: all the semantical units
are equivalent and uniform. Semantic equality for elements with base types is
up-to $\alpha$-equivalence: inhabitants are just bits of data (neutrals) which
can be compared in a purely syntactical fashion because we use nameless terms.
They are always uniform.

In the product case, the semantical objects are actual pairs and the definition
just forces the properties to hold for each one of the pair's components.

The function type case is a bit more hairy. While extensionality on uniform arguments is
simple to state, uniformity has to enforce a lot of invariants: application of uniform
objects should yield a uniform object, application of extensionally equal uniform objects
should yield extensionally equal objects and weakening and application should commute (up
to extensionality).

In the $\tylist$ case, extensional equality is an inductive set basically building
the (semantical) diagonal relation on lists of the same type. It is parametrized
by a relation $EQ_\sigma$ on terms of type $\model[\sigma][\Delta]$ (for any context
$\Delta$) which is, in the practical case instantiated with $\exteq[\sigma][\_][\_]$
as one would expect. Uniformity is, on the other hand, defined by recursion on the
semantical list. It could very well be defined as being parametric in something
behaving like $\uniform[\sigma][\_]$ but this is not necessary: there are no positivity
problems! It is therefore probably better to stick to a lighter presentation here.
The empty list indeed is uniform. A constructor-headed list is said to be uniform
if its head of type $\model$ is uniform and its tail also is uniform. The criterion
for a stuck list is a bit more involved. Mimicking the definition of uniformity for
functions, there are two requirements on the stuck map: applying it to a neutral
yields a uniform element of the model and application and weakening commute.
Lastly the second argument of the stuck append should be uniform too.
\end{definition}

\begin{remark}The careful reader will already have noticed that this defines a
family of equivalence relations; we will not make explicit use of reflexivity,
symmetry and transitivity in the paper but it is fundamental in the formalization.
\end{remark}

Recall that the completeness theorem was presented as expressing the fact that
elements equivalent with respect to the reduction relation were interpreted as
semantical objects behaving similarly. For this approach to make sense, knowing
that two semantical objects are extensionally equal should immediately imply
that their respective reifications are syntactically equal. Which is the case.

\begin{lemma}Reification, reflection and weakenings are compatible with the
notions of extensional equality and uniformity.
\begin{enumerate}
  \item If $\exteq$ then $\reify T = \reify U$
  \item If $t_{ne}$ is a neutral $\neutral$ then $\uniform[\sigma][(\reflect t_{ne})]$
  \item Weakening and reification commute for uniform objects
\end{enumerate}
\end{lemma}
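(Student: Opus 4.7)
The plan is to prove all three claims simultaneously by induction on the type index $\sigma$, with a nested induction on the semantical list whenever $\sigma = \tylist[\tau]$. The three statements are tightly interlocked and none of them can be separated out: the arrow case of (2) invokes (1) and (3) at the smaller types $\sigma$ and $\tau$, while the arrow case of (1) invokes (2) at $\sigma$, and similarly for lists.

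For statement (1), the unit, base and product cases are immediate from the respective defining clauses of $\exteq$ and $\reify$. At $\tyarrow$ we must equate $\telam x.\reify[\tau] F(\_,\reflect x)$ and $\telam x.\reify[\tau] G(\_,\reflect x)$; by IH (1) at $\tau$ this reduces to showing $\exteq[\tau][F(\_,\reflect x)][G(\_,\reflect x)]$, which is a direct instance of the hypothesis $\exteq[\tyarrow][F][G]$ applied to $\reflect x$, provided that $\reflect x$ is uniform — and this is exactly IH (2) at $\sigma$. The $\tylist[\sigma]$ case proceeds by induction on the $\exteqlist$-derivation: the $\tenil$ case is trivial, the cons case reduces to IH (1) at $\sigma$ together with the inner induction, and the mapp case rebuilds $\temappend[\telam x.\reify[\sigma] F_i(x)][\mathit{xs}][\listreify \mathit{YS_i}]$ on both sides, reducing the equality of the stored functions to the $\tyarrow$ case already handled and the equality of the tails to the inner induction.

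For statement (2), the unit, base and product cases are again direct. At $\tyarrow[\sigma][\tau]$, we must check the three clauses of $\uniform[\tyarrow]$ for the function $\lambda \Delta~\mathit{inc}~x.\reflect[\tau](\weaken(t_{ne}) \teapp \reify[\sigma] x)$. The first clause (uniformity of the output on uniform inputs) is IH (2) at $\tau$ applied to the neutral $\weaken(t_{ne}) \teapp \reify[\sigma] x$. The second clause (respect for extensional equality of uniform inputs) reduces, via IH (1) at $\sigma$, to the observation that extensionally equal uniform inputs reify to syntactically equal normals, making the two wrapped neutrals literally equal and hence reflecting to extensionally equal objects (a consequence of reflexivity of $\exteq$, which is a routine companion lemma). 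The third clause (commutation of weakening with application) combines the syntactic commutation of $\weaken$ with $\teapp$ on neutrals, IH (3) at $\sigma$ to push weakening inside $\reify[\sigma] x$, and IH (2) at $\tau$ with an appropriate auxiliary compatibility of $\reflect$ with weakening on neutrals. The $\tylist$ case packages the same three obligations inside the $\temappend[\lambda \Delta\_.\reflect][t_{ne}][\tenil]$ produced by $\listreflect$; the stored function here is the $\sigma$-reflection itself, so its uniformity clauses are discharged by IH (2) and IH (3) at $\sigma$.

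Statement (3) follows by induction on the type, with an auxiliary induction on the semantical list in the $\tylist$ case. Unit and base are immediate. Products split by IH. At $\tyarrow$, weakening of $\telam x.\reify[\tau] F(\_,\reflect x)$ commutes with the binding by $\alpha$-conversion and reduces, after applying the uniformity clause of $F$ that asserts $\weaken[\mathit{inc}]\,F(\mathit{inc}',S) \equiv F(\mathit{inc}'\cdot \mathit{inc}, \weaken[\mathit{inc}]S)$, to IH (3) at $\tau$ (to move the weakening inside $\reify[\tau]$) together with IH (1) at $\tau$ to absorb the residual extensional equality. The $\tylist$ case dispatches nil and cons by inner induction and IH (3) at $\sigma$, and handles the mapp case analogously to the arrow case, using the weakening-commutation clause of uniformity for the stored function.

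The principal obstacle is the $\tyarrow$ case of statement (2): assembling the three uniformity clauses for the function returned by $\reflect[\tyarrow]$ is the point at which all three inductive hypotheses at smaller types must be combined, and where the dependence on (1) is essential — without (1) one could not collapse extensionally equal uniform inputs into syntactically identical reified terms, and the second uniformity clause would be unprovable. The list case of (2) inherits exactly the same obstacle through $\listreflect$'s mapp.
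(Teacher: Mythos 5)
The paper states this lemma without any written proof (it defers to the Agda formalization), so there is no textual argument to compare against; your proposal is the natural one and is essentially correct: a mutual induction on the type index, with a nested induction on semantic lists and the auxiliary facts (reflexivity of semantic equality for uniform elements, compatibility of reflection with weakening) proved alongside, is exactly how the three statements must be interleaved, and your identification of the arrow case of (2) as the crux is right. Two minor imprecisions, neither structural: reflexivity of $\equiv_\sigma$ is only available for \emph{uniform} elements, which suffices here because IH (2) at $\tau$ supplies uniformity of the reflected neutral; and the mapp case of (1) is analogous to, rather than an instance of, the arrow case, since the stored functions there take syntactic neutrals rather than semantic arguments, so the hypothesis applies directly at a fresh variable without any appeal to uniformity.
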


Now that we know that all the theorem proving ahead of us will not be meaningless,
we can start actually tackling completeness. When applying an extensional function,
it is always required to prove that the argument is uniform. Being able to certify
the uniformity of the evaluation of a term is therefore of the utmost importance.

\begin{lemma} Evaluation preserves properties of the evaluation environment.
\begin{enumerate}
  \item Evaluation in uniform environments produces uniform values
  \item Evaluation in semantically equivalent environments produces semantically equivalent values
  \item Weakening the evaluation of a term is equivalent to evaluating this term in a weakened environment
\end{enumerate}
\end{lemma}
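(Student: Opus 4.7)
The plan is to prove the three parts as a single statement by simultaneous structural induction on the typing derivation of $t$. This bundling is essential: in the $\telam$ case, discharging part (1) directly requires the inductive hypotheses for parts (2) and (3) on the body, because function uniformity in Figure~\ref{logreleq} is itself a three-way conjunction. Before attacking the main induction I would pointwise extend $\uniform$ to environments and $\exteq$ to environments in the obvious way, and check the routine stability of those pointwise notions under semantical environment extension and weakening.

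Next I would establish the corresponding compatibility lemmas for the semantical combinators $\vappend$, $\vmap$, and $\vfold$: each preserves uniformity, respects semantic equality pointwise (given equality/uniformity of all their inputs), and commutes with weakenings. These are shown by induction on their semantical list argument. The only non-mechanical case is $\vfold$ applied to a stuck $\temappend[F][\mathit{xs}][\mathit{YS}]$: since the combinator calls $\reify$ and $\reflect$ to rebuild a neutral $\tefold$, I would discharge it using the previously established compatibility of $\reify$ and $\reflect$ with $\exteq$ and $\uniform$, and the auxiliary fact that weakening commutes with reification on uniform objects (already available from the lemma preceding this one).

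With those combinator lemmas in hand the main induction runs case by case. The variable case of each of the three parts is immediate from the pointwise hypothesis on the environment. The introduction forms ($\tett$, $\tenil$, $\tepair$, $\tecons$) and the non-lambda eliminators ($\teapp$, $\tepiun$, $\tepide$) combine IHs directly, either with the trivial semantics of pairs and unit, or with the action of a semantical function on uniform/equivalent arguments. For $\temap$, $\teappend$ and $\tefold$ I would chain the IHs with the compatibility lemmas for $\vmap$, $\vappend$ and $\vfold$ established above. In the $\telam$ case, showing $\termeval(\telam x. b, R)$ uniform requires me to supply the three conjuncts of function uniformity: IH (1) on $b$ for any uniform input extending $R$, IH (2) on $b$ applied to extended environments differing only at $x$, and IH (3) on $b$ applied to weakenings of $R$; the $\telam$ subgoals of parts (2) and (3) are similarly discharged by the same IHs on extended or weakened environments.

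The hard part will be the $\telam$ case of part (3): I must show that weakening the semantical function $\termeval(\telam x. b, R)$ by some $\mathit{inc}_1$ is, up to $\exteq$, the same as evaluating the lambda in the weakened environment $\weakene[\mathit{inc}_1](R)$. Unfolding the Kripke-style definition of weakening on function types reduces this to proving, for any uniform $S$ in a further extension $E'$ via $\mathit{inc}_2$, that $\termeval(b, \weakene[\mathit{inc}_2 \cdot \mathit{inc}_1](R) , x \mapsto S) \equiv \weaken[\mathit{inc}_1](\termeval(b, \weakene[\mathit{inc}_2](R), x \mapsto S))$ which is exactly IH (3) on $b$, but it only applies once I have established that the environment on the right is uniform -- which in turn forces me to invoke IH (1) together with uniformity of $S$ and coherence of inclusion composition. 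This is the precise place where the three statements have to be proved in lock-step rather than in sequence, and where care must be taken to propagate uniformity and equivalence of every Kripke argument in order not to get stuck.
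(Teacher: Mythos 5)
The paper states this lemma without an explicit proof (it is discharged in the \agda{} formalization), and your plan---simultaneous induction on the typing derivation with the three properties proved in lock-step, supported by pointwise extensions to environments, compatibility lemmas for $\vappend$, $\vmap$ and $\vfold$, and the preceding reify/reflect lemma for the stuck-mapp case of $\vfold$---is essentially the development the paper's formalization follows. Your identification of the $\telam$ case as the point where the three statements must be proved mutually (since function uniformity is itself the three-way conjunction) is exactly the right observation, so the proposal is correct and takes the same route.
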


\begin{theorem}If $s$ and $t$ are two terms in $\term$ such that $s \reduce t$
and if $R$ is a uniform environment in $\modele$ then
$\exteq[\sigma][\termeval(s , R)][\termeval(t , R)]$.
\end{theorem}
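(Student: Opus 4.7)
The plan is to proceed by induction on the derivation of $s \reduce t$. The cases fall naturally into five groups, each reducing to a compatibility or algebraic lemma about the semantical combinators of Figure~\ref{evalfun}.

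\textbf{Congruence cases.} For a reduction happening inside a subterm, we invoke the induction hypothesis on the subderivation and then appeal to the fact that each semantical combinator ($\vappend$, $\vmap$, $\vfold$, pairing, projections, application) respects $\color{NavyBlue}{\equiv}_{\sigma}$ on uniform inputs. These compatibility lemmas must be established beforehand, by structural induction on the relevant objects of the model and using the previous lemma stating that evaluation of a term in a uniform environment is itself uniform.

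\textbf{$\beta\delta\iota$-reductions on constructors.} The $\beta$-cases for application, projections, and list destructors require a standard substitution lemma: for any term $b$, substitution $\rho$ and related uniform value $U$, one has $\exteq[\tau][\termeval(b[\rho , x \mapsto u], R)][\termeval(b, R[\rho\pointwiselift , x \mapsto U])]$, proved by induction on $b$ using the second and third points of the previous lemma. The $\delta\iota$-cases for $\vmap$, $\vappend$, and $\vfold$ on $\tenil$ or a cons follow directly from the defining equations in Figure~\ref{evalfun}, with reflexivity of extensional equality obtained from the uniformity of the subevaluations.

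\textbf{$\eta$-reductions.} These are essentially free: the model already interprets $\tyunit$ as $\top$, $\typrod$ as a pair, and $\tyarrow$ as a Kripke function space, so the evaluation of an $\eta$-expanded term and the original one land at the same semantical object up to $\color{NavyBlue}{\equiv}_{\sigma}$, again provided the argument is uniform so that the quantification in the arrow clause can be discharged.

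\textbf{$\nu$-reductions.} This is the hard part. For each $\nu$-rule we need a semantical analogue proved on uniform lists, namely that $\vappend$, $\vmap$, $\vfold$ satisfy the monoid, functor and fusion laws up to $\color{NavyBlue}{\equiv}_{\sigma}^{{\color{NavyBlue}\mathtt{`list}}}$. Each such lemma is proved by induction on the semantical list, with the three base shapes $\tenil$, $HD \tecons TL$ and $\temappend[F][xs][YS]$. The first two cases are routine unfoldings of the definitions. The mapp case is where the work is concentrated: for example, fold/append fusion $\exteq[\tau][\vfold C N (XS \vappend YS)][\vfold C (\vfold C N YS) XS]$ unfolds on a mapp to a goal about $\reflect[\tau] \tefold(\ldots)$, which we discharge by using that uniformity of $F$ forces $F \vcomp G$ to behave uniformly, that reification respects semantic equality (the first point of the earlier lemma), and that $\reflect$ of equal neutrals are extensionally equal. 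The analogous fusion and associativity lemmas for $\vmap$ follow the same pattern, with the interchange laws for weakening and composition (which are precisely what the uniformity predicate on arrows asserts) being invoked to align the two $\vcomp$-chains produced on either side.

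The main obstacle is thus the mapp case of the $\nu$-lemmas: keeping track of the uniformity side-conditions and using them to justify the reification/reflection round-trips hidden inside the semantical $\vfold$ and $\vmap$ on stuck lists. Once those algebraic laws are in place, the overall induction on $s \reduce t$ simply dispatches each reduction rule to the matching lemma.
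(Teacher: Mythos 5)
Your sketch follows the paper's proof essentially step for step: induction on the derivation of $s \reduce t$, congruence cases discharged by the uniformity-preservation and compatibility lemmas for the semantical combinators, the $\beta$ case by the commutation of evaluation with substitution (via extensionally equal environments), the $\iota$ cases by reflexivity since evaluation realizes them definitionally, and the $\nu$-rules by semantic monoid, functor and fusion laws proved by induction on the semantical list (the ``nut'') for uniform values, with the mapp case carrying the reify/reflect bookkeeping. The only difference is one of emphasis --- the paper stresses that the arrow-type $\eta$-rule is not quite free, needing the weakening/evaluation commutation and the uniformity clauses, but you list exactly those ingredients elsewhere in your sketch, so the argument is the same.
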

\begin{proof}One proceeds by induction on the proof that $s$ reduces to $t$.
\paragraph{Structural rules} Structural rules can be discharged by combining
induction hypotheses and reflexivity proofs using previously proved lemma such
as the fact that evaluation in uniform environments yields uniform elements
for the structural rule for the argument part of application.

\paragraph{$\beta\iota$-rules} Each one the $\iota$ rules holds by reflexivity
of the extensional equality, indeed evaluation realizes these computation rules
syntactically. The case of the $\beta$ rule is slightly more complicated.
Given a function $\telam x. b$ and its argument $u$, one starts by proving
that the diagonal semantical environment extended with the evaluation of $u$
in $R$ is extensionally equal to the evaluation in $R$ of the diagonal
substitution extended with $u$. Thence, knowing that the evaluations of a term
in two extensionally equal environments are extensionally equal, one can see
that the evaluation of the redex is related to the evaluation of the body in
an environment corresponding to the evaluation of the substitution generated
when firing the redex. Finally, the fact that $\termeval$ and substitution
commute (up-to-extensionality) lets us conclude.

\paragraph{$\eta\nu$-rules} definitely are the most work-intensive ones: except
for the ones for product and unit types which can be discharged by reflexivity
of the semantic equality, all of them need at least a little bit of theorem
proving to go through. It is possible to prove the map-id, map-append,
append-nil, associativity of append and various fusion rules by induction on
the `nut' for uniform values. Solving the goals is then just a matter of
combining the right auxiliary lemma with facts proved earlier on, typically
the uniformity of semantical object obtained by evaluating a term in a uniform
environment.
\end{proof}

\begin{corollary}[Completeness] For all terms $t$ and $u$ of type $\term$, if
$t \conv u$ then $\norm t = \norm u$.
\end{corollary}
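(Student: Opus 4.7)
The plan is to chain three facts: \textbf{(i)} the diagonal environment $\reflecte \Gamma$ is uniform, \textbf{(ii)} $t \conv u$ implies the semantic equality $\exteq[\sigma][\termeval(t , \reflecte \Gamma)][\termeval(u , \reflecte \Gamma)]$, and \textbf{(iii)} semantic equality of two semantical objects entails syntactic equality of their reifications. Chaining these and unfolding the definition $\norm t \eqdef \reify (\termeval (t , \reflecte \Gamma))$ delivers the goal.

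For \textbf{(i)}, I would proceed pointwise: each component of $\reflecte \Gamma$ is the reflection of a variable, hence of a neutral term, so uniformity follows from the second clause of the lemma asserting that $\uniform[\sigma][(\reflect t_{ne})]$ for every $\neutral[\sigma][\Gamma] \ni t_{ne}$. For \textbf{(ii)}, I would do induction on the derivation of $t \conv u$. Since $\conv$ is by definition the reflexive, symmetric, transitive, congruence closure of $\reduce$, the base cases are exactly the content of the previous theorem instantiated with the uniform environment $R \eqdef \reflecte \Gamma$, which hands us $\exteq[\sigma][\termeval(s , R)][\termeval(t , R)]$ for each single step $s \reduce t$. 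The reflexive, symmetric and transitive closure steps are discharged using the equivalence-relation structure of $\exteq$ that the remark explicitly records as available in the formalization, while congruence steps reduce to showing that each semantical combinator ($\vmap$, $\vappend$, $\vfold$, application, projections, cons) respects $\exteq$ on uniform inputs, which is part of the package of preservation lemmas used earlier. For \textbf{(iii)}, I would directly apply the first clause of the lemma stating that $\exteq[\sigma][T][U]$ implies $\reify T = \reify U$.

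Putting the pieces together: from $t \conv u$ and uniformity of $\reflecte \Gamma$ we get $\exteq[\sigma][\termeval(t , \reflecte \Gamma)][\termeval(u , \reflecte \Gamma)]$; reifying both sides yields $\reify (\termeval (t , \reflecte \Gamma)) = \reify (\termeval (u , \reflecte \Gamma))$, i.e.\ $\norm t = \norm u$.

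The main obstacle is step \textbf{(ii)}, and specifically the congruence closure cases for the list-manipulating primitives. The stuck-fold branch of $\vfold$ does not merely evaluate structurally: it calls $\reify$ on the accumulated combinator and re-injects the result via $\reflect$, so proving that two semantically equal stuck lists yield semantically equal folds forces a simultaneous appeal to the reify/reflect compatibility lemma \emph{and} the uniformity invariants on the function components of $\temappend[F][\mathit{xs}][\mathit{YS}]$ (namely uniformity of $F$ on neutral inputs and commutation of $F$ with weakening). These are precisely the data the uniformity predicate was engineered to carry, so the proof goes through, but only after unfolding the definitions carefully and discharging several side conditions about weakening and composition of $F$.
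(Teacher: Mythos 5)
Your proposal is correct and follows essentially the same route as the paper: establish uniformity of the diagonal environment via the reflection-produces-uniform-values lemma (the paper additionally notes preservation of uniformity under weakening), iterate the single-step theorem along the derivation of $t \conv u$ using the equivalence-relation structure of $\exteq$, and conclude with the lemma that semantically equal objects have syntactically equal reifications. The only divergence is cosmetic: the congruence cases you worry about in step (ii) are already absorbed into the paper's theorem for $s \reduce t$, whose ``structural rules'' paragraph handles them, so no separate congruence argument is needed at the level of $\conv$.
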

\begin{proof}Reflection produces uniform values and uniformity is preserved
through weakening hence the fact that the trivial diagonal environment is uniform.
Combined with iterations of the previous lemma along the proof that $t \conv u$,
we get that the respective evaluations of $t$ and $u$ are extensionally equal which
we have proved to be enough to get syntactically equal reifications.
\end{proof}

\begin{corollary}The equational theory enriched with $\nu$-rules is decidable.
\end{corollary}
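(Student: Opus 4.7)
The plan is to reduce decidability of $\conv$ on arbitrary terms to decidable syntactic equality on normal forms, and to obtain the latter by structural recursion. Given $t, u \colon \term$, the decision procedure computes $\norm t$ and $\norm u$ (both are total functions by the previous section) and returns the result of comparing them.

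First I would exhibit a decidable equality on $\normal$ (and simultaneously on $\neutral$), parametrized by the shared context and type. Since $\normal$ is a typed inductive family presented with de Bruijn indices, this is a routine mutual structural recursion: at each pair of constructors one either matches (and recurses on the subterms, asking for decidable equality at the smaller indices, including on the function part of a $\temappend$ and on the subterms of embedded neutrals) or concludes disequality by discrimination. The base case $\neutral[\tybase]$ reduces to comparison of de Bruijn indices, which is decidable on the finite variable space of a typed context. No equational theory enters this step: it is a purely syntactic check, so totality and decidability are immediate.

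With that in hand, the equivalence test for $\conv$ is $\norm t \stackrel{?}{=} \norm u$. Soundness of this test follows from the completeness corollary: if $t \conv u$ then $\norm t = \norm u$, so the test returns yes on every genuinely convertible pair. Completeness of the test follows from the soundness corollary: if the syntactic test returns yes then $\norm t = \norm u$, and since $t \reduces \norm t$ and $u \reduces \norm u$, we have $t \conv \norm t = \norm u \conv u$, hence $t \conv u$.

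I do not expect a real obstacle at this stage; all the hard work lives in the construction of $\norm$ and in the soundness and completeness theorems already proved. The only mildly delicate point is making sure the syntactic equality on $\normal$ is defined on the correct index discipline so that Agda accepts the mutual recursion with $\neutral$ (in particular inside the $\temappend$ constructor, whose map-function field forces a comparison under a binder at a different type). Once that definition goes through, decidability of the enriched equational theory is immediate from the conjunction of the two corollaries.
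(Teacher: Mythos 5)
Your proposal is correct and follows essentially the same route as the paper: normalize both terms, compare the results syntactically (up to $\alpha$, trivial in the nameless representation), use the soundness corollary to conclude convertibility when the normal forms agree, and the completeness corollary to refute convertibility when they differ. The only addition is your explicit spelling-out of decidable syntactic equality on $\normal$/$\neutral$, which the paper leaves as an obvious structural check.
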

\begin{proof}Given terms $t$ and $u$ of the same type $\term$, we can get two normal
forms $t_{nf} = \norm t$ and $u_{nf} = \norm u$ and test them for equality up-to
$\alpha$-conversion (which is a simple syntactic check in our nameless representation
in \agda{}).

If $t_{nf} = u_{nf}$ then the soundness result allows us to conclude that $t$
and $u$ are convertible terms.

If $t_{nf} \neq u_{nf}$ then $t$ and $u$ are not convertible. Indeed, if they
were then the completeness result guarantees us that $t_{nf}$ and $u_{nf}$ would
be equal which leads to a contradiction.
\end{proof}

\begin{example} of terms which are identified thanks to the internalization of
the $\nu$-rules.
\begin{enumerate}
  \item In a context with two functions $f$ and $g$ of type $\tyarrow[\sigma][\tyunit]$,
  $\telam xs. \temap (f , xs)$ and $\telam xs. \temap (g , xs)$ both normalize to
  $\telam xs. \temap (\telam \_. \tett, xs) \teappend \tenil$ and are therefore
  declared equal.
  \item At type $\def\Two{(\typrod[\tybase][\protect{\tybase[l]}])}
  \def\Twos{\tylist[\Two]}\term[\protect{\tyarrow[\Twos][\Twos]}]$,
  the terms $\telam xs. xs$ and $\telam xs. \temap (swap , \temap (swap , xs))$
  where $swap$ is the function $\telam p. (\tepide p \tepair \tepiun p)$ swapping
  the order of a pair's elements are convertible with normal form
  $\telam xs. \temap (\telam p. (\tepiun p \tepair \tepide p) , xs) \teappend \tenil$.
\end{enumerate}
\end{example}


\section{Scaling up to Type Theory}
\label{scale-up-tt}

Now that we know for sure that the judgmental equality can be safely extended
with some $\nu$-rules, we are ready to tackle more complex type theories.
We have already experimented with extending our simply-typed setting to a
universe of polynomial datatypes with map and fold.
We have to identify which parts of the setting are key to the success of this
technique and how to enforce that the generalized version still has good
properties.

\paragraph{Types} Arrow types will be replaced by $\Pi$-types and product
types by $\Sigma$-types but the basic machinery of evaluation and type-directed
$\eta$-expansion work in much the same way.

In Type Theory, it is not quite enough to be able to decide the judgmental
equality. Pollack's PhD thesis (\cite{phd:pollack94a}, Section 5.3.1), taught
us how to turn the typing relation with a conversion rule into a
syntax-directed typechecking algorithm by relying on ordinary evaluation (cf.
the application typing rule in Figure \ref{syndirectapp}). It is therefore
quite crucial for ensuring the reusability of previous typechecking algorithms
to be able to guarantee that ordinary evaluation is complete for uncovering
constructor-headed terms i.e. $\Gamma \vdash t \equiv C~ \vec{t_i} \colon T$
should imply that $t \leadsto^\star C~ \vec{t_i}'$.
This can be enforced by making sure that candidates for $\nu$-rules are only
reorganizing spines of stuck eliminators and are absolutely never emitting new
constructors.

\begin{figure}[h]
$$\inferrule{\Gamma \vdash f \colon F \and F \leadsto^\star (x : S) \rightarrow T \\
\Gamma \vdash s \colon S' \\
\Gamma \vdash S \equiv S' \colon \mathtt{Set}}
{\Gamma \vdash f s \colon T [s / x]}$$
\caption{Syntax-directed typing rule for application,
Pollack~\cite{phd:pollack94a}}
\label{syndirectapp}
\end{figure}

\paragraph{$\eta$-rules} A Type Theory does not need to have judgmental
$\eta$-rules for the $\nu$-rules to make sense. However this partially defeats
the purpose of this extension: without $\eta$-rules for products we fail to
identify the silly identity on lists of products \texttt{map swap . map swap}
with the more traditional one $\lambda x. x$ because $f_1 = \telam x. x$ is
different from $f_2 = \telam x. (\tepiun x \tepair \tepide x)$ when both terms
would reduce respectively to $\telam x. \temappend[f_1][x][\tenil]$ and
$\telam x. \temappend[f_2][x][\tenil]$. So close yet so far away!

\paragraph{Defined symbols} In this presentation, a handful of functions are
built-in rather than user-defined. This will probably be one of the biggest
changes when moving to a usable Type Theory.  We can enforce that functions
defined by pattern-matching have a fixed arity and are always fully applied
at that arity. Such a function is stuck if it is strict in a neutral argument.
Some type theories reduce pattern matching to the primitive elimination
operator for each datatype. To apply $\nu$-rules, we need to detect which
stuck eliminators correspond to which stuck pattern matches. This is the
same problem as producing readable output from normalizing open terms,
and it has already been solved by the `labelled type' translation used
in Epigram, which effectively inserts documentation of stuck pattern
matches into spines of stuck eliminators~\cite{epigram}.

\paragraph{Criteria for $\nu$-rules} Working in a setting where the datatypes
are given by a universe~\cite{LevitationPaper}, we should at least expect that
built-in generic operators, e.g. map, have associated $\nu$-rules. However, it
is clearly desirable to allow the programmer to propose $\nu$-rules for
programs of her own construction. How will the machine check that proposed
$\nu$-rules keep evaluation canonical and judgmental equality consistent and
decidable? We have already seen that $\nu$-rules must avoid to emit new
constructors; this can be summed up by the \textit{mantra}: ``A $\nu$-rule may
restart computation \emph{within} its contractum but \emph{never} in its
enclosing context''.

The candidates for $\nu$-rules should hold trivially by a Boyer-Moore style
induction; in other words, the $\beta\delta\iota-\nu$ critical pairs should be
convergent. This tells us that these rules are consistent and can be delayed
until after evaluation.

Obviously, the $\nu-\nu$ critical pairs should also be convergent. These three
criteria are all easy to check provided that $\nu$-reductions give rise to a
terminating term rewrite system.

This termination requirement is the last criterion. As a first instance, a
rather conservative approach could be to ask the user for a linear order on
defined symbols which we would lift to expressions by using the lexicographic
ordering of the encountered defined symbols starting from the ``nut'' and going
outwards. If this ordering is compatible with a left to right orientation of
the $\nu$-rules she wants to hold, then it is terminating. In the set of
$\nu$-rules used as an example in this paper, the simple ordering
$\teappend > \temap > \tefold$ is compatible with the rules.

\section{Further Opportunities for $\nu$-Rules}
\label{further-opportunities}

We were motivated to develop a proof technique for extending
definitional equality with $\nu$-rules because there are many
opportunities where we might profit by doing so. Let us set out a
prospectus.

\paragraph{Reflexive coercion for type-based equality.} Altenkirch,
McBride and W. Swierstra developed a propositional equality for
intensional type theory~\cite{ObsEq} which differs from the usual inductive
definition ($\mathtt{refl~a~:~a~=~a}$) in that its main eliminator
\[
\inferrule{S,T : \mathtt{Set}\quad Q : S = T\quad s : S}
          {s[Q:S=T\rangle : T}
\]
computes by structural recursion first on the \emph{types}
$S$ and $T$, and then (where appropriate) on $s$, rather than
by pattern matching on the proof $Q$. Equality is still
reflexive, so evaluation can leave us with terms
$n[\mathtt{refl}\:n:N=N\rangle : N$
where $n$ is a neutral term in a neutral type $N$. It is clearly
a nuisance that this term does not compute to $n$, as would happen
if the eliminator matched on the proof. The fix is to add a
$\nu$-rule which discards coercions whenever it is type-safe to do so:
\[
  \fbb{\fba{$s$}$[Q:S=T\rangle$} = \fba{$s$}\qquad \mbox{if}\;S\equiv T:\mathtt{Set}
\]
It is easy to check that adding this rule for neutral terms makes it
admissible for all terms, and hence that we need add it not to
evaluation, but only to the reification process which follows, just
as with the $\nu$-rules in this paper. There, as here, this spares
the evaluation process from decisions which involve $\eta$-expansion
and thus require a name supply. The $\nu$-rule thus gives us a non-disruptive
means to respect the full computational behaviour of inductive equality
in the observational setting.

\paragraph{Functor laws.} Barral and Soloviev give a treatment of
functor laws for parametrized inductive datatypes by modifying the
$\iota$-rules of their underlying type
theory~\cite{DBLP:conf/csr/BarralS06}.  We should very much hope to
achieve the same result, as we did here in the special case of lists,
just by adding $\nu$-rules. Our preliminary
experiments~\cite{PigWeekNu} suggest that we can implement functor
laws once and for all in a type theory whose datatypes are given once
and for all by a syntactic encoding of strictly positive functors, as
Dagand and colleagues propose~\cite{LevitationPaper,ElabData}.  Moreover, Luo
and Adams have shown~\cite{DBLP:journals/mscs/LuoA08} that structural
subtyping for inductive types can be reified by a coherent system of
implicit coercions if functor laws hold definitionally.

\newcommand{\bind}{>\!\!>\!\!=}
\paragraph{Monad laws.} Watkins et al. give a definitional treatment
of monad laws in order to achieve
an adequate representation of concurrent processes encapsulated
monadically in a logical
framework~\cite{DBLP:conf/types/WatkinsCPW03}.
For straightforward free monads, an experimental extension of Epigram (by
Norell, as it happens)~\cite{PigWeekNu} suggests that we may readily allow
$\nu$-rules:
\[
\fbb{$\fba{$t$}\bind\mathtt{return}$} = \fba{$t$}
\qquad
\fbc{\(\fbb{\((\fba{$t$}\bind \sigma)\)}\bind\rho\)} =
\fbb{\(\fba{$t$}\bind((\bind \sigma)\cdot\rho)\)}
\]
Atkey's Foveran system uses a
similar normalization method for free monad laws~\cite{FusionMonad},
again for an encoded universe of underlying functors.

\paragraph{Decomposing functors.} Dagand and colleagues further
note that their syntax of descriptions for indexed functors is, by virtue of being
a syntax, itself presentable as the free monad of a functor. The
description decoder
\[
  \mathtt{Decode} : \mathtt{IDesc}\:I \to (I \to \mathtt{Set})
  \to \mathtt{Set}
\]
is structurally recursive in the description and lifts pointwise to
an interpretation of substitutions in the \(\mathtt{IDesc}\) monad
\[\begin{array}{l}
  \llbracket\_\rrbracket :
  (O \to \mathtt{IDesc}\:I) \;\;\to\;\;
     (I \to \mathtt{Set}) \to  (O \to \mathtt{Set}) \\
  \llbracket \sigma \rrbracket\:X\:o = \mathtt{Decode}\:(\sigma\:o)\:X
\end{array}\]
as indexed
functors with a `map' operation satisfying functor laws. However, not
only does this interpretation \emph{deliver} functors, it is
\emph{itself} a contravariant functor: the identity substitution yields
the identity functor just by $\beta\delta\iota$, but we may also
interpret Kleisli composition as reverse functor composition
\[
  \llbracket (>\!\!>\!\!=\sigma)\cdot\rho \rrbracket =
    \llbracket \rho \rrbracket \cdot \llbracket \sigma \rrbracket
\]
by means of a $\nu$-rule
\[
  \fbc{Decode~\fbb{$(\fba{$D$}>\!\!>\!\!=\sigma)$}~$X$} =
  \fbb{Decode~\fba{$D$}~$(\llbracket\sigma\rrbracket\:X)$}
\]
taking each $D$ to be some $\rho\:o$. If we want to do a `scrap your
boilerplate' style traversal of some described container-like
structure, we need merely exhibit the decomposition of the description as some
$(>\!\!>\!\!=\sigma)\cdot\rho$, where $\rho$ describes the invariant
superstructures and $\sigma$ the modified substructures, then invoke
the functoriality of $\llbracket \rho \rrbracket$. This $\nu$-rule
thus lets us expose functoriality over substructures not anticipated by explicit
parametrization in datatype declarations. We thus recover the kind of
ad-hoc data traversal popularized by L\"ammel and Peyton
Jones~\cite{DBLP:conf/tldi/LammelJ03} by static structural means.

\paragraph{Universe embeddings.} A type theory with
inductive-recursive definitions is powerful enough to encode universes
of dependent types by giving a datatype of codes \emph{in tandem} with
their interpretations~\cite{DBLP:conf/tlca/DybjerS99}, the paradigmatic
example being
\[\begin{array}{@{}l@{\;}|@{\;}l@{}}
\mathtt{U}_1 : \mathtt{Set}
  & \mathtt{El}_1 : \mathtt{U}_1 \to \mathtt{Set} \\
\mathtt{`Pi}_1 : (S:\mathtt{U}_1)\to
  & \mathtt{El}_1\:(\mathtt{`Pi}_1\:S\:T) = \\
\qquad\qquad(\mathtt{El}_1\:S\to \mathtt{U}_1 )\to\mathtt{U}_1 &
\;\;(s : \mathtt{El}_1\:S)\to
  \mathtt{El}_1\:(T\:s) \\
\vdots & \vdots
\end{array}\]
Palmgren~\cite{Palmgren:universes} suggests that one way to model a
cumulative hierarchy of such universes is to give each a code in the
next, so
\[\begin{array}{@{}l@{\;}|@{\;}l@{}}
\mathtt{U}_2 : \mathtt{Set}
  & \mathtt{El}_2 : \mathtt{U}_2 \to \mathtt{Set} \\
\mathtt{`U}_1 : \mathtt{U}_2
  & \mathtt{El}_2\:\mathtt{`U}_1 = \mathtt{U}_1 \\
\mathtt{`Pi}_2 : (S:\mathtt{U}_2)\to
  & \mathtt{El}_2\:(\mathtt{`Pi}_2\:S\:T) = \\
\qquad\qquad(\mathtt{El}_2\:S\to \mathtt{U}_2 )\to\mathtt{U}_2 &
\;\;(s : \mathtt{El}_2\:S)\to
  \mathtt{El}_2\:(T\:s) \\
\vdots & \vdots
\end{array}\]
and then define an embedding recursively
\[\begin{array}{l}
\uparrow : \mathtt{U}_1 \to \mathtt{U}_2 \\
\uparrow(\mathtt{`Pi}_1\:S\:T) =
  \mathtt{`Pi}_2\:(\uparrow S)\:(\lambda s.\:\uparrow(T\:s))
\end{array}\]
but a small frustration with this proposal is that $s$ is abstracted at
type $\mathtt{El}_2\:(\uparrow S))$, but used at type
$\mathtt{El}_1\:S$, and these two types are not definitionally equal
for an abstract $S$. One workaround is to make $\uparrow$ a
constructor of $\mathtt{U}_2$, at the cost of some redundancy of
representation, but now we might also consider fixing the discrepancy
with a $\nu$-rule
\[
  \fbc{$\mathtt{El}_2\:\fbb{$(\uparrow \fba{$S$})$}) $} =
    \fbb{$\mathtt{El}_1\:\fba{$S$}$}
\]
This is peculiar for our examples thus far, in that the $\nu$-rule is
needed even to typecheck the $\delta\iota$-rules for $\uparrow$,
reflecting the fact that $\uparrow$ should not be any old function
from $\mathtt{U}_1$ to $\mathtt{U}_2$, but rather one which preserves
the meanings given by $\mathtt{El}_1$ and $\mathtt{El}_2$. In effect,
the $\nu$-rule is expressing the coherence property of a richer notion
of morphism. It is inviting to wonder what other notions of coherence
we might enable and enforce by checking that $\nu$-rules hold of the
operations we implement.

\paragraph{Non-examples.} A key characteristic of a $\nu$-rule is that
it is a nut-preserving rearrangement of neutral term layers. Whilst
this is good for associativity and sometimes for distributivity, it is
perfectly useless for commutativity. Suppose $+$ for natural numbers
is recursive on its first argument, and observe that rewriting $x + y$
to $y + x$ when $x$ is neutral will not result in a neutral term
unless $y$ is also neutral. Less ambitious rules such as
$x+\mathtt{suc}\:y = \mathtt{suc}\:(x+y)$ and $x*0 = 0$ similarly make
neutral terms come unstuck, and so cannot be postponed until
reification if we want to be sure that evaluation suffices to show
whether any expression in a datatype can be put into
constructor-headed form. Walukiewicz-Chrzaszcz has proposed a more
invasive adoption of rewriting for Coq, necessitating a modified
evaluator, but incorporating rules which can expose
constructors~\cite{DBLP:journals/jfp/Walukiewicz-Chrzaszcz03}. Her
untyped rewriting approach sits awkwardly with $\eta$-laws, but we can
find a more carefully structured compromise.

\section{Discussion}

We fully expect to scale this technology up to type theory. Abel and
Dybjer (with Aehlig~\cite{NbeDep1} and T. Coquand~\cite{NbeDep2})
have already given normalization by evaluation algorithms which we
plan to adapt.

Finding good criteria for checking that candidate $\nu$-rules can safely
be added is of the utmost importance. We want to let the
programmer negotiate the new $\nu$-rules she wants, as long as the
machine can check that they yield a notion of standard form and lift
from neutral terms to all terms by the prior equational theory.

It is also interesting to try to integrate $\nu$-rules with more
practical presentations of normalization. For instance Gr\'{e}goire
and Leroy's conversion by compilation to a bytecode machine derived
from Ocaml's ZAM~\cite{ConvTestZAMed} decides $\eta$ by expansion only
when provoked by a $\lambda$: such laziness is desirable when possible
but causes trouble with $\eta$-rules for unit types and may conceal
the potential to apply $\nu$-rules. Hereditary
substitution~\cite{DBLP:conf/types/WatkinsCPW03}, formalized by
Abel~\cite{HeredSubst1} and by Keller and Altenkirch~\cite{HeredSubst2},
may be easier to adapt.

\section*{Acknowledgements}

We would like to thank the anonymous reviewers for their helpful comments and
suggestions as well as Stevan Andjelkovic for carefully reading our draft.



\bibliographystyle{abbrvnat}
\bibliography{main}

\begin{thebibliography}{39}
\providecommand{\natexlab}[1]{#1}
\providecommand{\url}[1]{\texttt{#1}}
\expandafter\ifx\csname urlstyle\endcsname\relax
  \providecommand{\doi}[1]{doi: #1}\else
  \providecommand{\doi}{doi: \begingroup \urlstyle{rm}\Url}\fi

\bibitem[Abel(2009)]{HeredSubst1}
A.~Abel.
\newblock Implementing a normalizer using sized heterogeneous types.
\newblock \emph{Journal of Functional Programming}, 19\penalty0 (3-4):\penalty0
  287--310, 2009.

\bibitem[Abel et~al.(2007{\natexlab{a}})Abel, Aehlig, and Dybjer]{NbeDep1}
A.~Abel, K.~Aehlig, and P.~Dybjer.
\newblock Normalization by evaluation for {M}artin-{L}{\"o}f type theory with
  one universe.
\newblock \emph{Electr. Notes Theor. Comput. Sci}, 173:\penalty0 17--39,
  2007{\natexlab{a}}.

\bibitem[Abel et~al.(2007{\natexlab{b}})Abel, Coquand, and Dybjer]{NbeDep2}
A.~Abel, T.~Coquand, and P.~Dybjer.
\newblock Normalization by evaluation for {M}artin-{L}{\"o}f type theory with
  typed equality judgements.
\newblock In \emph{LICS}, pages 3--12. IEEE Computer Society,
  2007{\natexlab{b}}.

\bibitem[{Ahman}(2012)]{NbeEffects}
D.~{Ahman}.
\newblock Computational effects, algebraic theories and normalization by
  evaluation.
\newblock Master's thesis, University of Cambridge, June 2012.

\bibitem[{Ahman} and Sam(2013)]{NbeEffects2}
D.~{Ahman} and S.~Sam.
\newblock Normalization by evaluation and algebraic effects.
\newblock In \emph{29th Conference on Mathematical Foundations of Programming
  Semantics, {MFPS XXIX}, New Orleans, LA, June 2013}, 2013.

\bibitem[{Allais}(2012)]{ForgeAcquire}
G.~{Allais}.
\newblock Forge crowbars, {A}cquire normal forms.
\newblock Technical report, University of Strathclyde, 2012.
\newblock URL \url{http://gallais.org/pdf/report2012.pdf}.

\bibitem[Allais(2013)]{ArranSlides}
G.~Allais.
\newblock Glueing terms to models - variations on nbe.
\newblock Slides, May 2013.
\newblock URL \url{http://gallais.org/pdf/awayday13.pdf}.

\bibitem[Altenkirch and Reus(1999)]{DBLP:conf/csl/AltenkirchR99}
T.~Altenkirch and B.~Reus.
\newblock Monadic presentations of lambda terms using generalized inductive
  types.
\newblock In \emph{CSL}, pages 453--468, 1999.

\bibitem[Altenkirch et~al.(2007)Altenkirch, McBride, and Swierstra]{ObsEq}
T.~Altenkirch, C.~McBride, and W.~Swierstra.
\newblock Observational equality, now!
\newblock In A.~Stump and H.~Xi, editors, \emph{PLPV}, pages 57--68. ACM, 2007.

\bibitem[{Atkey}(2011)]{FusionMonad}
R.~{Atkey}.
\newblock A type checker that knows its monad from its elbow.
\newblock Blog post, December 2011.
\newblock URL \url{http://bentnib.org/posts/2011-12-14-type-checker.html}.

\bibitem[Barral and Soloviev(2006)]{DBLP:conf/csr/BarralS06}
F.~Barral and S.~Soloviev.
\newblock Inductive type schemas as functors.
\newblock In D.~Grigoriev, J.~Harrison, and E.~A. Hirsch, editors, \emph{CSR},
  volume 3967 of \emph{Lecture Notes in Computer Science}, pages 35--45.
  Springer, 2006.

\bibitem[Berger and Schwichtenberg(1991)]{BerSch91}
U.~Berger and H.~Schwichtenberg.
\newblock An inverse of the evaluation functional for typed lambda-calculus.
\newblock In \emph{LICS: IEEE Symposium on Logic in Computer Science}, 1991.

\bibitem[Boutillier(2009)]{LambList}
P.~Boutillier.
\newblock Equality for lambda-terms with list primitives.
\newblock Internship report, University of Strathclyde, ENS Lyon, July 2009.

\bibitem[Boyer and Moore(1975)]{BoyerMoore}
R.~S. Boyer and J.~S. Moore.
\newblock Proving theorems about {LISP} functions.
\newblock \emph{J. ACM}, 22\penalty0 (1):\penalty0 129--144, Jan. 1975.

\bibitem[Chapman et~al.(2005)Chapman, Altenkirch, and
  McBride]{DBLP:conf/sfp/ChapmanAM05}
J.~Chapman, T.~Altenkirch, and C.~McBride.
\newblock Epigram reloaded: a standalone typechecker for {ETT}.
\newblock In M.~C. J.~D. van Eekelen, editor, \emph{Trends in Functional
  Programming}, volume~6 of \emph{Trends in Functional Programming}, pages
  79--94. Intellect, 2005.

\bibitem[Chapman et~al.(2010)Chapman, Dagand, McBride, and
  Morris]{LevitationPaper}
J.~Chapman, P.-E. Dagand, C.~McBride, and P.~Morris.
\newblock The gentle art of levitation.
\newblock \emph{SIGPLAN Not.}, 45\penalty0 (9):\penalty0 3--14, Sept. 2010.

\bibitem[Chapman(2009)]{ChapmanPhd}
J.~M. Chapman.
\newblock \emph{Type checking and normalisation}.
\newblock PhD thesis, University of Nottingham, 2009.

\bibitem[Coquand(2002)]{Coquand02}
C.~Coquand.
\newblock A formalised proof of the soundness and completeness of a simply
  typed lambda-calculus with explicit substitutions.
\newblock \emph{Higher-Order and Symbolic Computation}, 15\penalty0
  (1):\penalty0 57--90, 2002.

\bibitem[Coquand and Dybjer(1997)]{CoqDyb97}
T.~Coquand and P.~Dybjer.
\newblock Intuitionistic model constructions and normalization proofs.
\newblock \emph{Mathematical. Structures in Comp. Sci.}, 7:\penalty0 75--94,
  Feb. 1997.
\newblock ISSN 0960-1295.

\bibitem[{Dagand} and {McBride}(2012)]{ElabData}
P.-E. {Dagand} and C.~{McBride}.
\newblock {Elaborating Inductive Definitions}.
\newblock \emph{ArXiv e-prints}, Oct. 2012.

\bibitem[{Danielsson}(2007)]{NadNbeDep}
N.~A. {Danielsson}.
\newblock A formalisation of a dependently typed language as an
  {Inductive-Recursive} family.
\newblock In T.~Altenkirch and C.~McBride, editors, \emph{Types for Proofs and
  Programs}, volume 4502 of \emph{Lecture Notes in Computer Science}, pages
  93--109. Springer Berlin Heidelberg, 2007.

\bibitem[Danvy(1999)]{TypeDirected}
O.~Danvy.
\newblock Type-directed partial evaluation.
\newblock In J.~Hatcliff, T.~Mogensen, and P.~Thiemann, editors, \emph{Partial
  Evaluation}, volume 1706 of \emph{Lecture Notes in Computer Science},
  chapter~16, pages 367--411. Springer Berlin / Heidelberg, Berlin, Heidelberg,
  1999.

\bibitem[de~Bruijn(1972)]{deBruijn:dummies}
N.~G. de~Bruijn.
\newblock {Lambda Calculus notation with nameless dummies: a tool for automatic
  formula manipulation}.
\newblock \emph{Indagationes Mathematic{\ae}}, 34:\penalty0 381--392, 1972.

\bibitem[Dybjer and Setzer(1999)]{DBLP:conf/tlca/DybjerS99}
P.~Dybjer and A.~Setzer.
\newblock A finite axiomatization of inductive-recursive definitions.
\newblock In J.-Y. Girard, editor, \emph{TLCA}, volume 1581 of \emph{Lecture
  Notes in Computer Science}, pages 129--146. Springer, 1999.

\bibitem[Filinski(2001)]{Filinski2001Normalization}
A.~Filinski.
\newblock Normalization by evaluation for the computational {Lambda-Calculus}.
\newblock In S.~Abramsky, editor, \emph{Typed Lambda Calculi and Applications},
  volume 2044 of \emph{Lecture Notes in Computer Science}, chapter~15, pages
  151--165. Springer Berlin / Heidelberg, Berlin, Heidelberg, Apr. 2001.

\bibitem[Girard(2006)]{Girard06}
J.~Girard.
\newblock \emph{Le Point Aveugle: Cours de logique: Tome 1, Vers la
  perfection}.
\newblock Hermann, 2006.

\bibitem[Gr\'{e}goire and Leroy(2002)]{ConvTestZAMed}
B.~Gr\'{e}goire and X.~Leroy.
\newblock A compiled implementation of strong reduction.
\newblock \emph{SIGPLAN Not.}, 37\penalty0 (9):\penalty0 235--246, 2002.

\bibitem[INRIA()]{coq}
INRIA.
\newblock The {C}oq proof assistant.

\bibitem[Keller and Altenkirch(2010)]{HeredSubst2}
C.~Keller and T.~Altenkirch.
\newblock Hereditary substitutions for simple types, formalized.
\newblock In \emph{Proceedings of the third ACM SIGPLAN workshop on
  Mathematically structured functional programming}, MSFP '10, pages 3--10, New
  York, NY, USA, 2010. ACM.

\bibitem[L{\"a}mmel and Jones(2003)]{DBLP:conf/tldi/LammelJ03}
R.~L{\"a}mmel and S.~L.~P. Jones.
\newblock Scrap your boilerplate: a practical design pattern for generic
  programming.
\newblock In Z.~Shao and P.~Lee, editors, \emph{TLDI}, pages 26--37. ACM, 2003.
\newblock ISBN 1-58113-649-8.

\bibitem[Luo and Adams(2008)]{DBLP:journals/mscs/LuoA08}
Z.~Luo and R.~Adams.
\newblock Structural subtyping for inductive types with functorial equality
  rules.
\newblock \emph{Mathematical Structures in Computer Science}, 18\penalty0
  (5):\penalty0 931--972, 2008.

\bibitem[McBride(2010)]{McBride2010Outrageous}
C.~McBride.
\newblock Outrageous but meaningful coincidences: dependent type-safe syntax
  and evaluation.
\newblock In \emph{Proceedings of the 6th ACM SIGPLAN workshop on Generic
  programming}, WGP '10, New York, NY, USA, 2010. ACM.

\bibitem[{McBride}(2010)]{PigWeekNu}
C.~{McBride}.
\newblock {EE-PigWeek} {4-9 Jan 2010}.
\newblock Blog post, January 2010.
\newblock URL \url{http://www.e-pig.org/epilogue/?p=306}.

\bibitem[McBride and McKinna(2004)]{epigram}
C.~McBride and J.~McKinna.
\newblock The view from the left.
\newblock \emph{J. Funct. Program.}, 14\penalty0 (1):\penalty0 69--111, Jan.
  2004.

\bibitem[Norell(2008)]{agda}
U.~Norell.
\newblock Dependently typed programming in {A}gda.
\newblock In P.~W.~M. Koopman, R.~Plasmeijer, and S.~D. Swierstra, editors,
  \emph{Advanced Functional Programming}, volume 5832 of \emph{Lecture Notes in
  Computer Science}, pages 230--266. Springer, 2008.

\bibitem[Palmgren(1998)]{Palmgren:universes}
E.~Palmgren.
\newblock On {U}niverses in {T}ype {T}heory.
\newblock In G.~Sambin and J.~Smith, editors, \emph{Twenty Five Years of
  Constructive Type Theory}. Oxford Univ. Press, 1998.

\bibitem[Pollack(1994)]{phd:pollack94a}
R.~Pollack.
\newblock \emph{The theory of {LEGO} -- a proof checker for the extendend
  calculus of constructions}.
\newblock PhD thesis, University of Edinburgh, 1994.

\bibitem[Walukiewicz-Chrzaszcz(2003)]{DBLP:journals/jfp/Walukiewicz-Chrzaszcz03}
D.~Walukiewicz-Chrzaszcz.
\newblock Termination of rewriting in the calculus of constructions.
\newblock \emph{J. Funct. Program.}, 13\penalty0 (2):\penalty0 339--414, 2003.

\bibitem[Watkins et~al.(2003)Watkins, Cervesato, Pfenning, and
  Walker]{DBLP:conf/types/WatkinsCPW03}
K.~Watkins, I.~Cervesato, F.~Pfenning, and D.~Walker.
\newblock A concurrent logical framework: The propositional fragment.
\newblock In S.~Berardi, M.~Coppo, and F.~Damiani, editors, \emph{TYPES},
  volume 3085 of \emph{Lecture Notes in Computer Science}, pages 355--377.
  Springer, 2003.

\end{thebibliography}



\end{document}